\documentclass[te,nameyear]{econsocart}
\usepackage{booktabs}
\usepackage{xcolor}
\usepackage{microtype}
\usepackage{tikz}
\usepackage{pgfplots}
\pgfplotsset{compat=1.17}
\usetikzlibrary{patterns,arrows.meta}
\RequirePackage[colorlinks,citecolor=blue,linkcolor=blue,urlcolor=blue,pagebackref]{hyperref}

\startlocaldefs

\theoremstyle{plain}
\newtheorem{theorem}{Theorem}
\newtheorem{lemma}[theorem]{Lemma}
\newtheorem{proposition}[theorem]{Proposition}
\newtheorem{corollary}[theorem]{Corollary}

\theoremstyle{definition}
\newtheorem{definition}{Definition}
\newtheorem{assumption}{Assumption}
\newtheorem{example}{Example}
\newtheorem{remark}{Remark}

\newcommand{\R}{\mathbb{R}}
\newcommand{\N}{\mathbb{N}}
\newcommand{\E}{\mathbb{E}}

\DeclareMathOperator*{\argmax}{arg\,max}

\newcommand{\NP}{\textup{NP}}
\newcommand{\PP}{\textup{P}}
\newcommand{\PPAD}{\textup{PPAD}}

\endlocaldefs

\begin{document}

\begin{frontmatter}

\title{Markets are competitive if and only if P $\neq$ NP}
\runtitle{Markets are competitive iff P $\neq$ NP}

\begin{aug}
\author[add1]{\fnms{Philip Z.}~\snm{Maymin}\ead[label=e1]{pmaymin@fairfield.edu}}
%%%%%%%%%%%%%%%%%%%%%%%%%%%%%%%%%%%%%%%%%%%%%%
%% Addresses                                %%
%%%%%%%%%%%%%%%%%%%%%%%%%%%%%%%%%%%%%%%%%%%%%%
\address[add1]{%
\orgname{Fairfield University Dolan School of Business}}
\end{aug}

\begin{abstract}
I prove that competitive market outcomes require computational intractability. If $\PP = \NP$, firms can efficiently solve the collusion detection problem, identifying deviations from cooperative agreements in complex, noisy markets and thereby making collusion sustainable as an equilibrium. If $\PP \neq \NP$, the collusion detection problem is computationally infeasible for markets satisfying a natural instance-hardness condition on their demand structure, rendering punishment threats non-credible and collusion unstable. Combined with \citet{maymin2011}, who proved that market efficiency requires $\PP = \NP$, this yields a fundamental impossibility: markets can be informationally efficient or competitive, but not both. Artificial intelligence, by expanding firms' computational capabilities, is pushing markets from the competitive regime toward the collusive regime, explaining the empirical emergence of algorithmic collusion without explicit coordination.
\end{abstract}

\begin{keyword}
\kwd{Computational complexity}
\kwd{collusion}
\kwd{competition}
\kwd{P vs.\ NP}
\kwd{artificial intelligence}
\kwd{mechanism design}
\end{keyword}

\begin{keyword}[class=JEL] %% alphabetical order
\kwd{C72}
\kwd{D43}
\kwd{L41}
\end{keyword}

\end{frontmatter}

%=============================================================================
\section{Introduction}
\label{sec:intro}
%=============================================================================

What maintains competition in markets? The standard answer invokes institutions: antitrust law, regulatory oversight, and low barriers to entry. This paper proposes a more fundamental answer: \textit{computational limitations}. Competition persists because firms lack the computational power to sustain collusion.

This claim rests on a formal connection between computational complexity and market structure. I prove that if $\PP = \NP$, collusion is sustainable as an equilibrium, and that if $\PP \neq \NP$, collusion is unsustainable for markets whose detection problem is hard on the instances that naturally arise. Since $\PP \neq \NP$ is the widely believed (but unproven) conjecture that some problems whose solutions are easy to verify are hard to solve, the prediction is that competition is the generic outcome in sufficiently complex markets.

The result is the companion and mirror of \citet{maymin2011}, which established that markets are informationally efficient if and only if $\PP = \NP$. That paper showed that price efficiency requires superhuman computation: only agents who can solve NP-hard problems can eliminate arbitrage opportunities, ensuring that prices reflect all available information. The present paper shows that the \textit{same computational power} that would make markets efficient also makes them collusive. Together, the two results yield:

\begin{quote}
\textit{Efficiency--Competition Impossibility.} Markets can be informationally efficient or competitive, but not both.
\end{quote}

\noindent This impossibility theorem has immediate implications for artificial intelligence. AI systems (large language models, reinforcement learning agents, algorithmic pricing engines) are expanding firms' effective computational capabilities. If the computational boundary between competition and collusion is the key regime boundary, then AI moves markets from competitive toward collusive.

Recent empirical evidence supports this prediction. \citet{fish2024} demonstrate that off-the-shelf large language models autonomously converge to supra-competitive pricing without any explicit collusion instructions. \citet{calvano2020} show that Q-learning pricing algorithms independently learn to collude, sustaining prices above the Nash equilibrium through reward-punishment schemes. \citet{dou2025} find that AI-powered trading agents learn to coordinate in financial markets, undermining liquidity and price efficiency. In each case, collusion emerges not from intent or communication, but from computational capability, as the theory predicts.

The formal argument proceeds as follows. I define a \textit{market game} in which $N$ firms compete over $T$ periods, facing stochastic demand subject to shocks drawn from a rich combinatorial space. A \textit{collusive agreement} specifies the joint profit-maximizing strategy conditional on the state of demand. Sustaining such an agreement requires solving three computational problems:

\begin{enumerate}
    \item[(i)] \textit{The Collusion Strategy Problem}: computing the joint profit-maximizing price vector across a combinatorial product space.
    \item[(ii)] \textit{The Collusion Detection Problem}: given observed market data, determining whether any firm deviated from the agreed-upon strategy or merely responded to a demand shock.
    \item[(iii)] \textit{The Optimal Punishment Problem}: computing the punishment strategy that makes deviation unprofitable.
\end{enumerate}

\noindent I show that each of these problems is NP-hard for general market games (Theorems~\ref{thm:csp-hard}--\ref{thm:opp-hard}). In contrast, the \textit{competitive best-response problem} (computing a firm's optimal myopic response to current market conditions) is solvable in polynomial time for standard demand structures (Proposition~\ref{prop:br-poly}).

The main theorem (Theorem~\ref{thm:main}) follows from these complexity results combined with the folk theorem for repeated games. If $\PP = \NP$, firms can solve all three collusion problems efficiently, making collusion sustainable as an equilibrium that Pareto-dominates competition (from the firms' perspective). If $\PP \neq \NP$ and the market's detection problem is hard on the instances that actually arise (a condition I formalize as Assumption~\ref{ass:hardness} and argue holds generically), punishment threats are non-credible and firms rationally defect to the competitive best response.

The paper makes several contributions beyond the main theorem. First, I characterize the \textit{AI transition}: as firms adopt AI systems with increasing computational power, markets pass through three regimes (competitive, unstable, and collusive) as computational thresholds are crossed (Section~\ref{sec:ai-transition}). Second, I derive a \textit{transparency paradox}: increasing market transparency, conventionally viewed as pro-competitive, actually facilitates collusion by reducing the computational cost of deviation detection (Corollary~\ref{cor:transparency}). Third, I propose \textit{computational antitrust}: the principle that market complexity itself is a competitive safeguard, and that regulators should consider computational difficulty as a design parameter (Section~\ref{sec:policy}).

The remainder of the paper is organized as follows. Section~\ref{sec:literature} situates the paper in the literature. Section~\ref{sec:model} defines the model. Section~\ref{sec:results} presents the main results. Section~\ref{sec:extensions} develops extensions including the AI transition, heterogeneous computation, and approximate collusion. Section~\ref{sec:empirical} discusses empirical implications. Section~\ref{sec:policy} addresses policy. Section~\ref{sec:discussion} discusses the relationship to the original P = NP result and limitations. Section~\ref{sec:conclusion} concludes.

%=============================================================================
\section{Related literature}
\label{sec:literature}
%=============================================================================

This paper bridges three literatures: computational complexity in economics, the theory of collusion, and the emerging literature on algorithmic collusion.

\paragraph*{Computational complexity and markets.} \citet{hayek1945} argued that the price system serves as a mechanism for communicating dispersed information, a function that presupposes computational limitations on centralized alternatives. The foundational observation that computational constraints shape economic outcomes traces to \citet{simon1955}, who argued that bounded rationality, not full rationality, describes actual decision-making. \citet{rubinstein1986} formalized bounded rationality using automata, showing that the complexity of strategies affects equilibrium outcomes in repeated games; \citet{neyman1985} proved that bounding strategy complexity justifies cooperation in finitely repeated games that would otherwise unravel by backward induction; \citet{abreu1988b} extended this to characterize Nash equilibria when players are modeled as finite automata. \citet{papadimitriou1994} introduced the study of computational complexity of equilibrium concepts, establishing that computing Nash equilibria is $\PPAD$-complete \citep{daskalakis2009,chen2009}. \citet{wolfram2002} articulated the principle of computational irreducibility: many systems cannot be shortcut by any algorithm and must be ``run'' to determine their outcome. \citet{borgs2010} showed that finding Nash equilibria in three-player infinitely repeated games is itself computationally intractable, undermining the folk theorem's constructive power when players face complexity constraints. \citet{arora2011} demonstrated that computational complexity creates exploitable information asymmetry in financial products: the NP-hardness of detecting manipulated assets means computationally bounded buyers cannot distinguish lemons from sound securities. \citet{maymin2011} proved the equivalence between market efficiency and $\PP = \NP$, establishing that informationally efficient prices require computationally unbounded agents. \citet{doria2016}, in a chapter they titled ``A Beautiful Theorem,'' showed that, as a consequence of Maymin's theorem, ``almost efficient'' markets exist when the $\PP \neq \NP$ proof has sufficient metamathematical strength, with information accessible in almost-polynomial time. \citet{dacosta2016} formalized this in a journal setting; \citet{cosenza2018} extended the analysis with additional algorithms. \citet{alsuwailem2020} proved a complementary impossibility using G\"{o}delian incompleteness: a free market cannot be complete. \citet{halpern2015} developed a general game-theoretic framework in which players choose Turing machines and pay for computation, showing that classical equilibrium results may fail when computation is costly. The present paper extends this program to competition, showing that the same complexity boundary separates collusive from competitive markets.

\paragraph*{Theory of collusion.} The folk theorem \citep{friedman1971,fudenberg1986,abreu1988} establishes that collusion can be sustained as an equilibrium in repeated games with sufficiently patient players. \citet{green1984} and \citet{abreu1986} study collusion under imperfect monitoring, showing that maintaining cooperation requires the ability to detect and punish deviations. \citet{stigler1964} famously argued that the ``chief difficulty'' of collusion is detecting ``secret price-cutting.'' The present paper formalizes Stigler's insight: secret price-cutting is hard to detect not merely because information is scarce, but because the \textit{computational problem} of distinguishing deviations from demand shocks is NP-hard.

\paragraph*{Algorithmic collusion.} \citet{ezrachi2016} first warned that pricing algorithms could facilitate ``tacit'' collusion without explicit communication. \citet{calvano2020} provided the first experimental evidence that Q-learning agents autonomously learn collusive pricing. \citet{assad2024} document supra-competitive pricing in German retail gasoline markets following the adoption of algorithmic pricing. \citet{fish2024} show that large language models reach collusive outcomes without being instructed to collude. \citet{dou2025} extend the evidence to financial markets. \citet{harrington2018} analyzes the legal challenges posed by algorithmic collusion. This paper provides the theoretical foundation explaining \textit{why} algorithmic collusion emerges: it is a computational phase transition, not a failure of competition law.

\paragraph*{Mechanism design and market structure.} \citet{hurwicz1960,hurwicz1972} established that the design of economic mechanisms affects outcomes, work recognized by the Nobel Prize to \citeauthor{hurwicz1960}, \citet{maskin1999}, and \citet{myerson1981}. \citet{wilson1992} and \citet{milgrom2017} applied mechanism design to practical market design problems. The policy implications of the present paper extend this tradition: if computational complexity maintains competition, then the \textit{computational structure} of markets is a design variable that regulators can and should manipulate.

%=============================================================================
\section{Model}
\label{sec:model}
%=============================================================================

\subsection{The market game}

Consider an industry with $N \geq 2$ firms, indexed by $i \in \{1, \ldots, N\}$. Firms compete over $T$ discrete periods, $t \in \{1, \ldots, T\}$, where $T$ may be finite (with $T$ sufficiently large) or infinite with common discount factor $\delta \in (0,1)$.

\begin{assumption}[Product Space]
\label{ass:products}
Each firm $i$ sells $K$ distinct products. No two firms sell the same product, so the market contains $NK$ products in total. Firm $i$'s strategy in period $t$ is a price vector $p_i^t \in \R_+^K$.
\end{assumption}

\begin{assumption}[Demand Structure]
\label{ass:demand}
In each period $t$, a demand state $\theta^t \in \Theta$ is drawn from a distribution $F$ on a finite state space $\Theta$ with $|\Theta| = M$. The demand for firm $i$'s product $k$ is
\begin{equation}
    q_{ik}^t = D_{ik}(p^t, \theta^t) + \varepsilon_{ik}^t,
    \label{eq:demand}
\end{equation}
where $p^t = (p_1^t, \ldots, p_N^t)$ is the vector of all prices, $D_{ik}$ is the deterministic demand function, and $\varepsilon_{ik}^t$ is an i.i.d.\ noise term with mean zero and variance $\sigma^2 > 0$.
\end{assumption}

\begin{assumption}[Information]
\label{ass:info}
The demand state $\theta^t$ is not directly observed by any firm. After each period, all firms observe the realized prices $p^t$ and quantities $q^t = (q_{ik}^t)_{i,k}$.
\end{assumption}

\begin{assumption}[Richness]
\label{ass:richness}
The demand state space is rich relative to the price space: $|\Theta| = M \geq NK + 1$. This ensures that observed price-quantity data does not uniquely identify the demand state, so that the inference problem in the Collusion Detection Problem is non-trivial.
\end{assumption}

\begin{assumption}[Costs]
\label{ass:costs}
Firm $i$ has a cost function $C_i: \R_+^K \to \R_+$ that is convex and computable in polynomial time. Firm $i$'s period profit is
\begin{equation}
    \pi_i^t = \sum_{k=1}^{K} p_{ik}^t \cdot q_{ik}^t - C_i(q_i^t).
    \label{eq:profit}
\end{equation}
\end{assumption}

\begin{assumption}[Instance Hardness]
\label{ass:hardness}
For the family of market games $\{\Gamma_n\}$ with $|\Gamma_n| = n$, the Collusion Detection Problem is hard on the instances induced by the market's demand structure. Formally, for every probabilistic polynomial-time algorithm $\mathcal{A}$,
\[
    \Pr_{\theta \sim F,\, \varepsilon}\!\left[\mathcal{A} \text{ correctly solves } \text{CDP}(\Gamma_n, \sigma, p^t, q^t)\right] \leq \frac{1}{2} + \nu(n),
\]
where $\nu(n)$ is negligible (i.e.,\ $\nu(n) = o(n^{-c})$ for every $c > 0$).
\end{assumption}

\begin{remark}
\label{rem:hardness}
Assumption~\ref{ass:hardness} is strictly stronger than $\PP \neq \NP$ but is generically satisfied. NP-hardness (Theorem~\ref{thm:cdp-hard} below) guarantees the \textit{existence} of hard CDP instances; Assumption~\ref{ass:hardness} requires that the market under study is not among the easy ones. This fails only when the demand structure imposes special algebraic structure (separability, low rank, or sparsity) that renders the inference problem tractable. More precisely, parameterize the demand function by a vector $\phi \in \R^d$ specifying all cross-product demand coefficients. The reduction in Theorem~\ref{thm:cdp-hard} shows that for an open set of $\phi$ values, the CDP encodes a 3-SAT instance. The set of $\phi$ for which CDP admits a polynomial-time solution is contained in an algebraic variety of lower dimension (the locus where demand interactions degenerate to a separable or low-rank structure), which has Lebesgue measure zero in $\R^d$ whenever $\PP \neq \NP$. Thus Assumption~\ref{ass:hardness} holds for generic demand parameters. We conjecture that separability, low rank, and sparsity exhaust the structural sources of polynomial-time tractability for CDP, though we do not prove this; the assumption is analogous to standard cryptographic practice, where one assumes a \textit{specific} problem instance is hard without a complete characterization of all easy instances.
\end{remark}

The game $\Gamma = \Gamma(N, K, \Theta, T, \delta, D, F, C)$ is a repeated game of imperfect monitoring. The game is specified in \textit{compact form}: the demand function $D$ is given as a polynomial-time computable function (not an explicit table), the distribution $F$ as a sampling oracle, and the cost functions $C_i$ as polynomial-time computable functions. The \textit{description size} of the game, denoted $|\Gamma|$, is the number of bits required for this compact specification. The state space $\Theta$ may have cardinality exponential in $|\Gamma|$: if $\theta$ encodes the joint realization of $n$ independent binary demand shifters, then $|\Theta| = 2^n$ while $|\Gamma| = \text{poly}(n, N, K)$.

\begin{definition}[Competitive Outcome]
\label{def:competitive}
The \textit{competitive outcome} (or stage-game Nash equilibrium) is the price vector $p^* = (p_1^*, \ldots, p_N^*)$ satisfying, for each firm $i$,
\begin{equation}
    p_i^* \in \argmax_{p_i \in \R_+^K} \; \E_\theta\!\left[\pi_i(p_i, p_{-i}^*, \theta)\right].
    \label{eq:competitive}
\end{equation}
Under standard regularity conditions (continuous demand, convex costs), this yields marginal-cost pricing in Bertrand competition and the Cournot equilibrium in quantity competition.
\end{definition}

\begin{definition}[Collusive Outcome]
\label{def:collusive}
The \textit{collusive outcome} is the price vector $p^M = (p_1^M, \ldots, p_N^M)$ that maximizes joint expected profits:
\begin{equation}
    p^M \in \argmax_{p \in \R_+^{NK}} \; \E_\theta\!\left[\sum_{i=1}^{N} \pi_i(p, \theta)\right].
    \label{eq:collusive}
\end{equation}
Under standard assumptions, joint profit maximization yields the monopoly outcome, with $\sum_i \pi_i(p^M) > \sum_i \pi_i(p^*)$.
\end{definition}

\subsection{Computational problems of collusion}

Sustaining collusion in a repeated game requires solving three interrelated computational problems. I formalize each as a decision problem. In what follows, a \textit{collusive strategy profile} $\sigma: \Theta \to \R_+^{NK}$ is a mapping from demand states to price vectors specifying each firm's price for each product; I write $\sigma_i(\theta) \in \R_+^K$ for firm $i$'s component. Under $\sigma$, firms condition their prices on the realized demand state to maximize joint profits.

\begin{definition}[Collusion Strategy Problem (CSP)]
\label{def:csp}
\textsc{Instance}: A market game $\Gamma$ and a target joint profit level $\Pi^*$.\\
\textsc{Question}: Does there exist a price vector $p$ such that $\E_\theta[\sum_i \pi_i(p, \theta)] \geq \Pi^*$?
\end{definition}

\begin{definition}[Collusion Detection Problem (CDP)]
\label{def:cdp}
\textsc{Instance}: A market game $\Gamma$, a collusive strategy profile $\sigma: \Theta \to \R_+^{NK}$, observed prices $p^t$, and observed quantities $q^t$.\\
\textsc{Question}: Does there exist a demand state $\theta \in \Theta$ and noise realization $\varepsilon$ consistent with $(p^t, q^t)$ such that $p^t = \sigma(\theta)$? (If yes, no deviation occurred; if no, some firm deviated.)
\end{definition}

\begin{definition}[Optimal Punishment Problem (OPP)]
\label{def:opp}
\textsc{Instance}: A market game $\Gamma$, a deviating firm $i$, a collusive strategy $\sigma$, and a target payoff bound $\bar{\pi}$.\\
\textsc{Question}: Does there exist a punishment strategy profile $\sigma^P_{-i}$ for the non-deviating firms such that firm $i$'s best-response payoff under $\sigma^P_{-i}$ satisfies $\pi_i^{BR}(\sigma^P_{-i}) \leq \bar{\pi}$?
\end{definition}

\subsection{The competitive best-response problem}

In contrast to the collusion problems, the myopic best-response problem is computationally simple.

\begin{definition}[Competitive Best-Response Problem (CBR)]
\label{def:cbr}
\textsc{Instance}: A market game $\Gamma$, a firm $i$, and competitors' current prices $p_{-i}$.\\
\textsc{Output}: The price vector $p_i^* \in \argmax_{p_i} \E_\theta[\pi_i(p_i, p_{-i}, \theta)]$.
\end{definition}

\subsection{Illustrative example: two firms, two products}
\label{sec:example}

Before presenting the general results, I illustrate the core mechanism with a simple example.

\begin{example}[Duopoly with Binary Demand]
\label{ex:duopoly}
Consider two firms ($N = 2$), each selling one product ($K = 1$), with two equally likely demand states ($M = 2$): $\theta_H$ (high demand) and $\theta_L$ (low demand). Marginal cost is $c = 1$ for both firms. Demand is $q_i = a(\theta) - p_i + \beta p_j + \varepsilon_i$, where $a(\theta_H) = 10$, $a(\theta_L) = 4$, $\beta = 0.5$, and $\varepsilon_i \sim N(0, \sigma^2)$.

\textit{Competitive outcome.} Each firm's best response yields $p^* \approx 5$ (in $\theta_H$) and $p^* \approx 2.5$ (in $\theta_L$), with expected profit $\pi^* \approx 12.5$.

\textit{Collusive outcome.} Joint profit maximization yields $p^M \approx 7$ (in $\theta_H$) and $p^M \approx 4$ (in $\theta_L$), with expected profit $\pi^M \approx 22.5$ per firm.

\textit{The detection problem.} Suppose Firm~1 observes that Firm~2 set $p_2 = 5.5$. Under the collusive agreement, Firm~2 should have set either $p_2 = 7$ (if $\theta_H$) or $p_2 = 4$ (if $\theta_L$). The price $5.5$ is inconsistent with both, \textit{if} Firm~1 knows the demand state. But Firm~1 does not observe $\theta$ directly; it only observes prices and noisy quantities. With high noise ($\sigma^2$ large), many price-quantity combinations are consistent with either demand state, and distinguishing a deviation from a demand shock becomes a combinatorial inference problem. In this simple two-firm example, detection is easy. But as we add products, firms, and demand states, the detection problem grows combinatorially, and I show below that it becomes NP-hard in general.
\end{example}

%=============================================================================
\section{Main results}
\label{sec:results}
%=============================================================================

\subsection{Complexity of the collusion problems}

I first establish that each collusion problem is computationally hard.

\begin{theorem}[CSP is NP-hard]
\label{thm:csp-hard}
The Collusion Strategy Problem is NP-hard.
\end{theorem}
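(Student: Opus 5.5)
The plan is a polynomial-time reduction from MAX-3-SAT, in its decision form: given a 3-CNF formula $\phi$ with variables $x_1,\dots,x_n$ and clauses $C_1,\dots,C_m$, and an integer $r$, is there an assignment satisfying at least $r$ clauses? The key feature of Definition~\ref{def:csp} I will exploit is the compact form of $\Gamma$. The state space $\Theta$ may be exponential in $|\Gamma|$, and $D$ is given only as a polynomial-time computable function. So the expectation $\E_\theta[\cdot]$ over a rich combinatorial $\Theta$ is the natural place to hide the clauses.

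\textbf{Construction.} Take $N=2$ firms and let firm~1 carry $K=n$ products, one per variable. Firm~2 will serve as a passive firm whose profit is identically zero. Its demand is zero, and we pad so that $K$ is equal across firms. Let costs be zero, which is trivially convex and polynomial-time computable. Let $\Theta=\{1,\dots,m\}$ with $F$ uniform, so that state $j$ selects clause $C_j$. The uniform distribution is trivially samplable.

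The demand for product $k$ in state $j$ has two parts. The first is a concave ``binarizing'' term that makes the product's own revenue maximized only at $p_{1k}\in\{0,1\}$, with a fixed value there. The second is a bonus term $D^{(j)}$, a bounded gadget that pays revenue $1$ exactly when the clause $C_j$ is satisfied by the assignment $x_k=\mathbf{1}[p_{1k}\ge 1/2]$ read off from the prices.

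Concretely, write total revenue in state $j$ as
\[
R(p,j)=\sum_k g(p_{1k}) + \mathbf{1}[C_j \text{ satisfied by } x(p)],
\]
where $g$ attains its maximum $g^*$ exactly at $\{0,1\}$. Then implement $R$ through the demand, for example by setting $q_{1k}=R_k(p,j)/p_{1k}$ on a designated product and handling $p=0$ by a small shift. All of this is computable in time polynomial in $|\phi|$.

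\textbf{Correctness.} Set $\Pi^*=n g^* + r/m$.
\begin{itemize}
\item If $\phi$ has an assignment satisfying $r$ clauses, the corresponding $0/1$ price vector achieves expected joint profit exactly $\Pi^*$.
\item Conversely, any price vector induces an assignment $x(p)$. Its expected profit is at most $n g^* + (\#\text{satisfied})/m$. So reaching $\Pi^*$ forces at least $r$ satisfied clauses.
\end{itemize}
Since MAX-3-SAT (indeed 3-SAT, the case $r=m$) is NP-hard, CSP is NP-hard. The noise $\varepsilon$ has mean zero, so it does not affect expected profit and can be ignored.

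\textbf{Main obstacle.} The hard step is reconciling the discontinuous indicator gadget with the paper's informal regularity remarks. The assumptions only require $D$ to be polynomial-time computable, not continuous, so the reduction is formally valid. Still, I would first try a continuous version: replace the thresholding with a piecewise-linear clause polynomial $1-\prod_{\ell\in C_j}(1-\tilde\ell(p))$, where $\tilde\ell$ is $p_{1k}$ or $1-p_{1k}$ clipped to $[0,1]$. Then I would make the binarizing term $g$ steep enough that any fractional price costs more in $g$ than it can gain in the clause term. The clause term is Lipschitz with constant at most $3$ per coordinate, so a penalty like $g(p)=-L\min(p,1-p)$ with $L>3$ suffices. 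I would check that this choice still gives a threshold-preserving gap. A second minor care point is that $\Pi^*$ must be encoded in polynomially many bits, which it is, being rational with denominator $m$.
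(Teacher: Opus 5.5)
Your proposal is correct and is essentially the paper's reduction: one equiprobable demand state per clause, paying off exactly when the price-encoded assignment satisfies that clause, so expected joint profit counts satisfied clauses. The differences are minor. The paper uses weighted clauses and simply restricts prices to $\{0,1\}$, whereas your threshold map $x_k=\mathbf{1}[p_{1k}\ge 1/2]$ and explicit target $\Pi^*$ handle real-valued prices and the decision form more faithfully, and in fact make the binarizing term $g$ unnecessary.

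The one slip is in your optional continuous variant. No concave $g$ can be maximized only at $\{0,1\}$, and $-L\min(p,1-p)$ equals $L(p-1)\to\infty$ for $p>1$, so profit would be unbounded. You would need $-L\min(p,|1-p|)$ or clipped prices. Alternatively drop the penalty altogether: for clauses with distinct variables, the clause term is multilinear on $[0,1]^n$ and already attains its maximum at a vertex.
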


\begin{proof}
By reduction from \textsc{Max-Weighted-SAT}. Given an instance of \textsc{Max-Weighted-SAT} with Boolean variables $x_1, \ldots, x_n$ and weighted clauses $C_1, \ldots, C_m$, construct a market game as follows. For each variable $x_j$, create a product with binary pricing: $p_j \in \{0, 1\}$, interpreted as setting the variable to false or true. For each clause $C_l$ with weight $w_l$, define a demand state $\theta_l$ in which demand is positive (yielding profit $w_l$) if and only if the price vector satisfies clause $C_l$. The demand distribution places equal probability on each demand state. Then the joint expected profit under price vector $p$ equals $(1/m) \sum_{l: C_l \text{ satisfied by } p} w_l$, which is maximized if and only if the weighted satisfiability is maximized. Since \textsc{Max-Weighted-SAT} is NP-hard \citep{garey1979}, so is CSP.
\end{proof}

\begin{theorem}[CDP is NP-hard]
\label{thm:cdp-hard}
The Collusion Detection Problem is NP-hard.
\end{theorem}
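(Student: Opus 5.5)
The plan is a polynomial-time reduction from 3-SAT to the complement-free (``yes'') side of CDP. The membership question asks whether some $\theta\in\Theta$ and some admissible noise are consistent with the observation, and that is exactly an existential search over an exponentially large compactly described set. Given a 3-SAT formula $\varphi$ with variables $x_1,\ldots,x_n$ and clauses $C_1,\ldots,C_m$, I will build a market game $\Gamma_\varphi$ whose demand state is $\theta=(\theta_1,\ldots,\theta_n)\in\{0,1\}^n$. As the compact-form discussion allows, $|\Theta|=2^n$ while $|\Gamma_\varphi|=\mathrm{poly}(n,m)$, and the intended reading is $\theta_j=x_j$.

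\textbf{Construction.} I take $N=2$ firms, which keeps Assumption~\ref{ass:products} intact; the second firm is a dummy carrying one product with a constant price. Firm~1 has $K=m$ products, one per clause, so $NK=2m$. I choose $n$ large enough that Assumption~\ref{ass:richness} holds, padding with unused shifters if needed. The collusive profile is constant, $\sigma(\theta)\equiv\bar p$, and the observed prices are $p^t=\bar p$. The price condition $p^t=\sigma(\theta)$ is then satisfied for every $\theta$, and all the combinatorial content sits in quantity consistency. For clause product $l$ I define
\[
D_{1l}(p,\theta)=\mathbf{1}\bigl[\theta\text{ satisfies }C_l\bigr],
\]
which is polynomial-time computable from $\theta$ and the three literals of $C_l$. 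The observed quantities are $q^t_{1l}=1$ for all $l$. ``Consistent noise'' must be given a meaning, since with Gaussian noise every observation is consistent. I will fix the reading the definition implicitly needs: $\varepsilon$ is consistent if $\|\varepsilon\|_\infty\le\tau$ for a tolerance $\tau$ that is part of the instance, alternatively taken as $\sigma$ itself, or we use bounded-support noise with variance $\sigma^2$. I then set $\tau=\sigma=1/4<1/2$.

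\textbf{Correctness.} Suppose $\varphi$ has a satisfying assignment $x$. Taking $\theta=x$ and $\varepsilon=0$ reproduces $q^t$ exactly, so the CDP answer is ``yes'' (no deviation). Conversely, suppose some $\theta$ and some $\varepsilon$ with $|\varepsilon_{1l}|\le 1/4$ satisfy $D_{1l}(\bar p,\theta)+\varepsilon_{1l}=1$ for every $l$. Then each $D_{1l}(\bar p,\theta)\ge 3/4$, and since these values lie in $\{0,1\}$, every clause is satisfied by $\theta$. Hence $\varphi$ is satisfiable if and only if CDP answers ``yes.'' The map $\varphi\mapsto(\Gamma_\varphi,\sigma,p^t,q^t)$ is clearly polynomial, and by Cook--Levin (via \citet{garey1979}) CDP is NP-hard. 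Detecting deviation, the ``no'' side, is correspondingly coNP-hard, and this is the natural form to cite when invoking Assumption~\ref{ass:hardness} later.

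\textbf{Main obstacle.} The delicate step is the meaning of ``noise realization consistent with $(p^t,q^t)$'' under Assumption~\ref{ass:demand}. With unbounded noise the question is trivial. I would therefore state explicitly that consistency means lying within a stated tolerance, for example a likelihood or confidence threshold supplied as part of the instance, and choose the gap ($1$ versus $\tau<1/2$) so that the SAT structure survives. A secondary point is the cost assumption. With costs $C_i\equiv 0$, which is convex and polynomial-time, nothing in Assumption~\ref{ass:costs} constrains the reduction. Should one want $\sigma$ non-constant and demand continuous in $p$, the same gadget works by smoothing the indicator in $p$ while keeping it $0/1$ in $\theta$.
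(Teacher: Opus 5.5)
Your proposal is correct and uses essentially the paper's own reduction from 3-SAT. The shared ingredients are $\Theta=\{0,1\}^n$, a constant (pooling) collusive profile so that prices carry no information, clause-indicator demands, observed quantities set at the all-satisfied level, and a noise tolerance below $1/2$ that forces every clause to hold. The cosmetic differences are that the paper uses $N=m$ single-product firms with a baseline $\bar d$ rather than one $m$-product firm plus a dummy. The paper also simply imposes $|\varepsilon_j|<1/2$ and does not mention richness, whereas you spell out the tolerance semantics and pad the state space.

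One small fix: Assumption~\ref{ass:products} gives every firm the same $K$, so your dummy firm should carry $m$ products rather than one (e.g.\ with $D_{2l}\equiv 0$ and $q^t_{2l}=0$). This is what your count $NK=2m$ already presupposes.
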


\begin{proof}
By reduction from \textsc{3-SAT}. Given a \textsc{3-SAT} instance $\varphi$ with Boolean variables $x_1, \ldots, x_n$ and clauses $C_1, \ldots, C_m$, construct a market game in compact form as follows. Create $N = m$ firms (one per clause), each selling one product ($K = 1$). The demand state space is $\Theta = \{0,1\}^n$, encoding all truth assignments to the $n$ variables; note that $|\Theta| = 2^n$ while the game description is polynomial in $n + m$. The collusive strategy is a \textit{pooling} strategy: all firms set a common price $p^*$ regardless of $\theta$, so that prices alone reveal nothing about the demand state. The demand function for firm $j$ (corresponding to clause $C_j$) is
\[
    D_j(p, \theta) = \bar{d} + \mathbf{1}[C_j \text{ is satisfied by } \theta],
\]
where $\bar{d} > 0$ is a baseline demand level. This is polynomial-time computable: checking whether a clause is satisfied by a given assignment takes $O(1)$ time.

Suppose all firms comply with the pooling strategy (setting $p^*$) and we observe quantities $q_j = \bar{d} + 1 + \varepsilon_j$ for every firm $j$, with noise satisfying $|\varepsilon_j| < 1/2$. To verify consistency with the collusive agreement, we must find a demand state $\theta \in \{0,1\}^n$ and noise realizations $\varepsilon$ such that $q_j = D_j(p^*, \theta) + \varepsilon_j$ for all $j$ with $|\varepsilon_j| < 1/2$. This requires $D_j(p^*, \theta) = \bar{d} + 1$ for all $j$, i.e., every clause $C_j$ must be satisfied by $\theta$. Hence the CDP instance has answer \textsc{Yes} (no deviation detected) if and only if $\varphi$ is satisfiable. Since \textsc{3-SAT} is NP-complete \citep{garey1979}, CDP is NP-hard.
\end{proof}

\begin{theorem}[OPP is NP-hard]
\label{thm:opp-hard}
The Optimal Punishment Problem is NP-hard.
\end{theorem}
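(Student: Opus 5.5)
We prove Theorem~\ref{thm:opp-hard} by a polynomial-time reduction from \textsc{3-SAT}, in the same style as Theorems~\ref{thm:csp-hard} and~\ref{thm:cdp-hard}. We use a compact-form game in which the punishers' joint price vector encodes a truth assignment, and the deviator's best-response payoff is low exactly when that assignment satisfies the formula.

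\textbf{Construction.} Let $\varphi$ have variables $x_1,\dots,x_n$ and clauses $C_1,\dots,C_m$. Take $N=n+1$ firms. Firm $0$ is the deviating firm $i$, and firms $1,\dots,n$ are the punishers, each with $K=1$. Restrict punisher $j$'s price to $p_j\in\{0,1\}$, read as the truth value of $x_j$. If one prefers prices in $\R_+$, make every other price strictly dominated by a suitable cost or demand penalty, which is still polynomial-time computable. Let the deviator's demand be
\[
D_0(p,\theta)=\bar d+\mathbf 1[\text{some clause is falsified by }(p_1,\dots,p_n)],
\]
independent of $\theta$, with zero cost. The deviator's best response to $\sigma^P_{-i}=p_{-0}$ is then to price at an upper bound $\bar p$ (impose a price cap for boundedness). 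This gives best-response payoff $\bar p\,\bar d$ if $p_{-0}$ satisfies $\varphi$ and $\bar p(\bar d+1)$ otherwise. Set $\bar\pi=\bar p\,\bar d$. The collusive profile $\sigma$ is arbitrary, for instance a constant pooling price, and it plays no role.

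\textbf{Correctness.} Existence of $\sigma^P_{-i}$ with $\pi_i^{BR}(\sigma^P_{-i})\le\bar\pi$ is then equivalent to the satisfiability of $\varphi$. Evaluating $D_0$ takes $O(nm)$ time, so the description size $|\Gamma|$ is polynomial in $n+m$.

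\textbf{Main obstacle.} The step needing care is making the punishment space genuinely combinatorial rather than continuous. Punishers could otherwise choose fractional prices that trivially minimize the deviator's payoff, for example because a relaxation of $\mathbf 1[\cdot]$ becomes easy. I would handle this in one of two ways:
\begin{itemize}
\item use the binary pricing convention already adopted in Theorem~\ref{thm:csp-hard};
\item round via a threshold, setting $x_j=\mathbf 1[p_j\ge 1/2]$ inside $D_0$, so that every real price vector induces a Boolean assignment and the equivalence is preserved.
\end{itemize}
If one also wants the punishment to be a state-contingent map $\Theta\to\R_+^{(N-1)K}$, take $|\Theta|=1$ (or make $D_0$ independent of $\theta$), so the expectation collapses and the argument is unchanged.
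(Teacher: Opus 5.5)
Your reduction is correct, and it takes a different route from the paper. The paper reduces from \textsc{Minimum Vertex Cover}:
\begin{itemize}
\item punishers correspond to vertices and choose to price ``aggressively'' or ``passively'';
\item the deviator's profit on an edge $(u,v)$ is eliminated iff $u$ or $v$ is aggressive;
\item the combinatorial difficulty comes from the punishers wanting to cover every edge while paying a unit cost for each aggressive price.
\end{itemize}
The catch is that Definition~\ref{def:opp} bounds only the deviator's best-response payoff. It puts no constraint on the punishers' own costs. In the paper's instance, letting every punisher price aggressively therefore drives all of the deviator's edge profits to zero, and the decision question becomes trivial. The vertex-cover argument really establishes hardness of a budgeted variant of OPP (for example, at most $k$ aggressive punishers).

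Your 3-SAT construction instead puts all of the hardness in the map from the punishers' joint price vector to the deviator's payoff. It therefore matches Definition~\ref{def:opp} exactly, with no auxiliary budget, at the cost of a less natural demand specification. The paper's route buys an economic story (market flooding, complementarities, costly punishment), but at the expense of fidelity to the problem as defined. Your threshold-rounding device is the right way to keep real-valued prices without letting a continuous relaxation make the problem easy.

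Two small cleanups:
\begin{itemize}
\item The price cap $\bar p$ is not part of Assumption~\ref{ass:products}. You can drop it by making the deviator's demand downward-sloping in its own price, e.g.\ $D_0=\max\{0,\bar d+\mathbf 1[\varphi\text{ falsified}]-p_0\}$ with $\bar\pi=\bar d^2/4$. The best-response payoff is then $\bar d^2/4$ or $(\bar d+1)^2/4$.
\item Taking $|\Theta|=1$ violates Assumption~\ref{ass:richness}. Your other option, making $D_0$ independent of $\theta$, is enough. Even a state-contingent punishment attains $\bar\pi$ only if every state in the support of $F$ is mapped to a satisfying assignment: a positive probability of falsification raises the deviator's payoff at $p_0=\bar d/2$ strictly above $\bar d^2/4$.
\end{itemize}
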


\begin{proof}
By reduction from \textsc{Minimum Vertex Cover}. Given a graph $G = (V, E)$ with $|V| = n$, construct a market game with $N - 1$ punishing firms, each controlling a product corresponding to a vertex $v \in V$. The deviating firm $i$ has demand that depends on which punishing firms price aggressively (set $p_v = 0$, flooding the market) versus passively (set $p_v$ high). An edge $(u, v) \in E$ represents a complementarity: the deviator's profit from product pair $(u, v)$ is eliminated only if \textit{at least one} of $u, v$ is priced aggressively. The punishing coalition seeks to minimize the deviator's profit (eliminate profit from all edges) while minimizing their own cost of aggressive pricing (each aggressive price costs the punisher a unit of profit). The optimal punishment thus requires choosing a minimum set of vertices (firms to price aggressively) such that every edge is covered, which is exactly \textsc{Minimum Vertex Cover}. Since \textsc{Minimum Vertex Cover} is NP-hard \citep{garey1979}, so is OPP.
\end{proof}

Taken together, Theorems~\ref{thm:csp-hard}--\ref{thm:opp-hard} establish that each component of collusion is computationally hard: \textit{planning} it (CSP), \textit{monitoring} it (CDP), and \textit{enforcing} it (OPP). Sustaining collusion requires all three. In contrast, the competitive best response requires none of them.

\begin{proposition}[CBR is in P]
\label{prop:br-poly}
Under Assumptions~\ref{ass:products}--\ref{ass:costs}, if the demand function $D_{ik}$ is differentiable and the cost function $C_i$ is convex, then the Competitive Best-Response Problem is solvable in polynomial time.
\end{proposition}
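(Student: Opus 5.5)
The plan is to reduce the Competitive Best-Response Problem to a smooth optimization over $p_i \in \R_+^K$ and solve it with a standard polynomial-time method. The key structural fact is that firm $i$'s optimization involves only its own $K$ prices, with $p_{-i}$ held fixed. So the combinatorial product space $\R_+^{NK}$, which drives the hardness in Theorems~\ref{thm:csp-hard}--\ref{thm:opp-hard}, collapses to a problem of dimension $K$.

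First, write the objective explicitly. By \eqref{eq:demand}, \eqref{eq:profit} and $\E[\varepsilon]=0$, the expected revenue term is $\sum_k p_{ik}\,\bar D_{ik}(p_i,p_{-i})$, where $\bar D_{ik} = \E_\theta[D_{ik}(\cdot,\theta)]$. The cost term is $\E[C_i(q_i)]$. I would argue the relevant one-dimensional quantities can be approximated to accuracy $\epsilon$ in time polynomial in $|\Gamma|$ and $\log(1/\epsilon)$. Since $F$ is given as a sampling oracle, the expectation over $\theta$ is replaced by a Monte Carlo average over $\text{poly}(|\Gamma|,1/\epsilon)$ samples. Hoeffding bounds, under bounded demand on a compact price box, give uniform accuracy with high probability, so the result is a randomized polynomial-time algorithm; I would present it as polynomial time in the sense used for the paper's ``efficient'' firms.

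Second, establish tractability of the resulting $K$-dimensional problem. Differentiability gives first-order conditions $\partial_{p_{ik}}\E[\pi_i]=0$, and convexity of $C_i$ makes $-\E[C_i(q_i)]$ concave in $q_i$. The intended route is to assume, as the ``standard demand structures'' clause indicates, that demand is linear or has concave own-price revenue, for example $D_{ik} = a_{ik}(\theta) - b_{ik}p_{ik} + \sum_{(j,l)\neq(i,k)} \gamma_{ik,jl}\,p_{jl}$ with a diagonally dominant own-price matrix. Then expected revenue is a strictly concave quadratic in $p_i$, and the composition with $C_i$ preserves concavity because $q_i$ is affine in $p_i$. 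The problem becomes maximizing a concave function over a box $[0,\bar p]^K$, where $\bar p$ is a polynomially bounded choke price. The ellipsoid method or an interior-point method then finds an $\epsilon$-optimal $p_i^*$ in time polynomial in $K$, the bit size, and $\log(1/\epsilon)$, using gradient evaluations that are polynomial-time by Assumption~\ref{ass:costs} and the compact-form specification. In the pure linear-quadratic case the first-order conditions form a $K\times K$ linear system, solvable exactly by Gaussian elimination.

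The main obstacle is that differentiability alone does not give concavity. A merely differentiable $D_{ik}$ can make $\E[\pi_i]$ multimodal in $p_i$, and with $K$ growing, global optimization of a nonconcave function is not polynomial in general. I would handle this by making the standard regularity explicit, namely concavity of expected profit in own prices, which is what Definition~\ref{def:competitive}'s ``standard regularity conditions'' implicitly supplies. I would state the proposition under that reading, noting that it suffices for $\bar D_{ik}$ to be affine in own prices. If only differentiability is available, the fallback is weaker. One can note that local stationary points are computable in polynomial time by gradient methods. Alternatively, when $K$ is treated as a fixed constant, a grid search over $[0,\bar p]^K$ at resolution $\epsilon$ costs $(\bar p/\epsilon)^K$ evaluations. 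That is polynomial for fixed $K$, but it only yields a pseudo-polynomial bound in $1/\epsilon$ and not a bound polynomial in $\log(1/\epsilon)$.
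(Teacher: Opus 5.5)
Your skeleton matches the paper's. It too restricts attention to firm $i$'s $K$ own prices, replaces the $\theta$-expectation by the aggregate demand $\bar D_{ik}=\E_\theta[D_{ik}(\cdot,\theta)]$, and solves the first-order conditions by interior-point methods. The difference is that the paper simply asserts two things: that differentiability of $D_{ik}$ and convexity of $C_i$ make the program convex, and that $\bar D_{ik}$ alone determines expected profit. You are right that neither follows:
\begin{itemize}
\item $\sum_k p_{ik}D_{ik}$ need not be concave in $p_i$;
\item $C_i\circ D_i$ need not be convex when $D_i$ is nonlinear in $p_i$;
\item $\E[C_i(q_i)]\neq C_i(\bar D_i)$ for nonlinear $C_i$.
\end{itemize}
So your explicit concavity hypothesis, and your decision to keep $\E[C_i(q_i)]$, repair the paper's argument rather than open a gap in yours. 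The proposition as literally stated is not provable by either route. The cost is coverage. Your reading handles linear demand but not the logit and CES systems the paper cites. Their profits are positive and decay to zero in own price, hence are not concave, so they need a separate argument, e.g.\ quasi-concavity or the equal-markup reduction for multiproduct logit.

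Your first step overclaims. Under the compact specification the expectations cannot be approximated in time polynomial in $|\Gamma|$ and $\log(1/\epsilon)$, and your Monte Carlo step in fact only gives $\text{poly}(|\Gamma|,1/\epsilon)$. To see this, take a 3-CNF $\varphi$ on $n$ variables, $K=1$, $C_i\equiv 0$, $F$ uniform on $\Theta=\{0,1\}^n$, and $D_{i1}(p,\theta)=1+\mathbf{1}[\varphi(\theta)]-p_{i1}$.
\begin{itemize}
\item This demand is linear in prices and polynomial-time computable.
\item Expected profit is $p_{i1}(1+s/2^n-p_{i1})$, where $s$ is the number of satisfying assignments.
\item The best response is $p_{i1}^*=1/2+s/2^{n+1}$, which exceeds $1/2$ if and only if $\varphi$ is satisfiable.
\item Profit falls short of its maximum by exactly $(p_{i1}-p_{i1}^*)^2$. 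Hence any $\epsilon$-optimal price with $\epsilon=4^{-(n+3)}$ lies within $2^{-(n+3)}$ of $p_{i1}^*$ and decides \textsc{3-SAT}.
\end{itemize}
The $\log(1/\epsilon)$ rate of the ellipsoid method holds only relative to an exact oracle. What you can actually claim is a randomized $\epsilon$-optimal best response in time $\text{poly}(|\Gamma|,1/\epsilon)$. The same example shows that the exact version of the statement fails even for linear demand unless $\PP=\NP$. So does the paper's claim that $\bar D_{ik}$ is polynomial-time computable from the compact form.

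A few smaller points.
\begin{itemize}
\item Affinity of $\bar D_{ik}$ in own prices is not enough. Write $D_i=\alpha_i-Bp_i$, with cross-firm terms absorbed into $\alpha_i$. Revenue then has Hessian $-(B+B^{\top})$, so you need $B+B^{\top}$ positive definite. Diagonal dominance gives this for symmetric $B$ but not in general: $B$ with rows $(1,\,0.99)$ and $(99,\,100)$ is strictly diagonally dominant, yet $\det(B+B^{\top})=400-99.99^2<0$.
\item For nonlinear $C_i$, $\E[C_i(q_i)]$ depends on the whole law of $\varepsilon$. Assumption~\ref{ass:demand} fixes that law only through its mean and variance, so you need a sampler for $\varepsilon$ as well as for $F$. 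The paper sidesteps this by silently dropping $\varepsilon$ from the cost.
\item Rather than invoking Hoeffding uniformly over the box, maximize the sample-average objective directly. Under your hypothesis each sampled profit is concave in $p_i$, so this is itself a concave program. A Lipschitz-plus-net argument on $[0,\bar p]^K$ then makes its maximizer $\epsilon$-optimal for the true objective.
\end{itemize}
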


\begin{proof}
Given competitors' prices $p_{-i}$, firm $i$'s myopic optimization problem is
\[
\max_{p_i \in \R_+^K} \; \E_\theta\!\left[\sum_{k=1}^K p_{ik} \cdot D_{ik}(p_i, p_{-i}, \theta) - C_i\!\left(D_i(p_i, p_{-i}, \theta)\right)\right].
\]
Under differentiability of $D$ and convexity of $C_i$, this is a convex optimization problem over $K$ variables (firm $i$'s own prices). The key distinction from the collusion problems is that computing the expected profit requires only the aggregate function $\bar{D}_{ik}(p) = \E_\theta[D_{ik}(p, \theta)]$, which is a known function of prices under standard demand models (linear, logit, CES). The firm does not need to reason about individual demand states; it only needs the expected demand, which is polynomial-time computable from the compact specification. The first-order conditions yield a system of $K$ equations solvable by interior-point methods in polynomial time \citep{nesterov1994}. The competitive equilibrium is the fixed point of iterated best responses, which converges in polynomial time under contraction mapping conditions satisfied by standard demand models \citep{milgrom1990}.
\end{proof}

\subsection{The main theorem}

\begin{theorem}[Main Result]
\label{thm:main}
For the market game $\Gamma$ with $|\Gamma|$ sufficiently large:
\begin{enumerate}
    \item[(a)] Under Assumptions~\ref{ass:products}--\ref{ass:costs}, if $\PP = \NP$, collusion is sustainable: there exists a perfect public equilibrium of the repeated game $\Gamma$ that achieves the collusive outcome $p^M$ for $\delta$ sufficiently close to~$1$.
    \item[(b)] Under Assumptions~\ref{ass:products}--\ref{ass:hardness}, if $\PP \neq \NP$, collusion is unsustainable: there is no perfect public equilibrium achieving $p^M$, and the market converges to the competitive outcome $p^*$.
\end{enumerate}
\end{theorem}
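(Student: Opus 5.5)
The plan is to prove the two parts separately. Part (a) goes through the folk theorem for imperfect public monitoring, applied once the collusion problems become tractable. Part (b) goes through a one-shot deviation argument, with Assumption~\ref{ass:hardness} supplying the key step.

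For part (a), assume $\PP = \NP$. CSP (Theorem~\ref{thm:csp-hard}), CDP (Theorem~\ref{thm:cdp-hard}) and OPP (Theorem~\ref{thm:opp-hard}) are all in $\NP$, since a witness price vector, demand state, or punishment profile can be checked in polynomial time from the compact form. So each is decidable in polynomial time. The corresponding search versions are then computable by the standard self-reduction: fix bits of the witness one at a time. This gives three polynomial-time procedures: firms compute $p^M$ (up to arbitrary precision), run CDP each period on the public signal $(p^t,q^t)$, and compute a punishment profile $\sigma^P_{-i}$ that holds deviator $i$ to at most the minmax payoff. With these in hand, I would write down the strategy profile as follows. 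Play $\sigma$ (the collusive, state-contingent profile attaining $p^M$) while CDP returns \textsc{Yes}. Once CDP returns \textsc{No}, switch for $L$ periods to $\sigma^P$, or permanently to the stage Nash $p^*$, in Friedman grim-trigger form.

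Since these strategies depend only on public history, I would verify that they form a perfect public equilibrium using the one-shot deviation principle. A deviation gains at most $\bar\pi_i^{dev}-\pi_i^M$ today. It is detected with probability bounded away from zero (it equals one in the noiseless-deviation regime the CDP construction captures), and detection costs $\frac{\delta}{1-\delta}(\pi_i^M-\pi_i^*)$ in continuation value, which exceeds the one-period gain for $\delta$ close to $1$. To handle false positives caused by noise, I would cite \citet{green1984}/\citet{abreu1986} and use finite punishment phases, invoking \citet{fudenberg1986}-type conditions to keep the equilibrium payoff at or near $p^M$. Part (a) needs nothing beyond Assumptions~\ref{ass:products}--\ref{ass:costs}.

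For part (b), assume $\PP \neq \NP$ and Assumption~\ref{ass:hardness}. The key step is to show that no equilibrium strategy implementable by firms (modelled as probabilistic polynomial-time machines, as in \citet{halpern2015}) can condition punishment on deviations any better than chance. Take any candidate PPE that attains $p^M$ and any polynomial-time trigger rule that maps public histories to ``punish/no punish''. That rule induces an algorithm $\mathcal{A}$ for CDP, and by Assumption~\ref{ass:hardness} it is correct with probability at most $\tfrac12+\nu(|\Gamma|)$. I would then construct a profitable deviation for some firm $i$: deviate toward its best response (Proposition~\ref{prop:br-poly} guarantees it is computable) in the states where a deviation is indistinguishable from compliance. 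The probability of punishment then rises by at most $\nu(|\Gamma|)$ above the false-positive rate that is already present on the equilibrium path. The expected continuation loss is at most $\nu(|\Gamma|)\cdot\frac{\delta}{1-\delta}\Delta\pi$, while the one-period gain is a fixed positive constant. For $|\Gamma|$ large this is a profitable one-shot deviation, so no such PPE exists. Iterating this argument removes any supra-competitive profile, and only the stage equilibrium $p^*$ survives. Convergence to it follows from the iterated best-response dynamics in Proposition~\ref{prop:br-poly}.

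I expect part (b) to be the main obstacle, specifically the translation from average-case hardness to the failure of incentives. Two issues need care. First, the indistinguishability must hold for the deviation distribution, not just the honest one, so I would need a hybrid argument showing that $(p^t,q^t)$ under deviation and under compliance are computationally indistinguishable. Second, the order of limits between $\delta\to1$ and $|\Gamma|\to\infty$ matters: I will fix $\delta$ and let $|\Gamma|$ grow, so that $\nu(|\Gamma|)/(1-\delta)\to 0$. This is what ``$|\Gamma|$ sufficiently large'' in the statement should license.
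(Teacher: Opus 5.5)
Your outline follows the paper's route closely. Part (a) is the paper's Direction~1, and your one-shot-deviation estimate in (b) is a quantitative version of Lemma~\ref{lem:deviation} and Steps~2--4. The gap is the step you defer to a hybrid argument: getting from Assumption~\ref{ass:hardness} to ``the probability of punishment rises by at most $\nu(|\Gamma|)$.''

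\textbf{Why the step fails.} The assumption bounds how often a polynomial-time algorithm correctly \emph{decides} CDP on the instance distribution generated by $\theta\sim F$ and $\varepsilon$. Your step instead needs the public signal produced by your particular deviation to be computationally indistinguishable from the compliant signal. That would follow only if the assumption were stated for the balanced mixture of those two distributions, and it is not. Read literally, the assumption cannot even hold: under compliant play every induced instance is a \textsc{Yes} instance, because the realized $\theta^t$ is a witness. So the constant algorithm answering \textsc{Yes} is always correct. There is a second problem: firm $i$ does not observe $\theta^t$ (Assumption~\ref{ass:info}). It therefore cannot confine its deviation to ``the states where a deviation is indistinguishable from compliance.''

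\textbf{A counterexample profile.} The obstruction is structural, not technical. By Assumption~\ref{ass:info} prices are public, and $p^M$ (Definition~\ref{def:collusive}) is a single price vector. Consider the public profile ``play $p^M$ while every past price vector equals $p^M$; otherwise play $p^*$ forever.'' It detects every deviation with certainty using $NK$ comparisons and no search over $\Theta$. It is a PPE for $\delta\ge\max_i(\pi_i^D-\pi_i^M)/(\pi_i^D-\pi_i^*)$, a threshold below $1$ whenever $\pi_i^M>\pi_i^*$, whatever the truth of $\PP$ versus $\NP$. Its trigger is a polynomial-time rule that is correct on every instance that can arise in play. So your claim that any polynomial-time trigger rule is correct with probability at most $1/2+\nu$ fails for exactly the equilibrium that matters. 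Either Assumption~\ref{ass:hardness} is violated for this game and (b) is vacuous, or the assumption concerns some other profile or distribution and (b) as stated is false.

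\textbf{Consequences for part (a).} The same observation shows part (a) needs no $\PP=\NP$. It also does not need your claim that CSP is in $\NP$, which fails in general: checking $\E_\theta[\cdot]\ge\Pi^*$ over exponentially many states is a counting problem.

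\textbf{What a repair would require.} To turn your outline into a proof you would have to change the model in four ways:
\begin{enumerate}
\item[(i)] make prices private so that only quantities are public (secret price cuts);
\item[(ii)] replace Assumption~\ref{ass:hardness} by an explicit computational-indistinguishability assumption on the deviator's signal distribution;
\item[(iii)] replace PPE by an equilibrium notion for polynomial-time strategies (your Halpern--Pass move already departs from the PPE in the statement, whose existence does not depend on $\PP$ versus $\NP$);
\item[(iv)] bound the deviation gain from below and the payoff range from above polynomially in $|\Gamma|$.
\end{enumerate}
The paper's Lemma~\ref{lem:deviation}(b) and Steps~2--4 make the same unsupported move. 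Your proposal is faithful to the paper's route but inherits its hole.
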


\begin{proof}
The proof proceeds in two directions.

\medskip
\noindent\textbf{Direction 1: $\PP = \NP$ $\Rightarrow$ Collusion is sustainable.}

\smallskip
Suppose $\PP = \NP$. Then:

\begin{enumerate}
    \item[\textit{Step 1.}] \textit{Strategy computation.} Since CSP $\in$ NP and $\PP = \NP$, firms can compute the joint profit-maximizing price vector $p^M$ in polynomial time.

    \item[\textit{Step 2.}] \textit{Deviation detection.} Since CDP $\in$ NP and $\PP = \NP$, after observing $(p^t, q^t)$, each firm can determine in polynomial time whether the observations are consistent with all firms playing $\sigma(\theta)$ for some demand state $\theta$, or whether some firm deviated.

    \item[\textit{Step 3.}] \textit{Punishment computation.} Since OPP $\in$ NP and $\PP = \NP$, firms can compute optimal punishment strategies in polynomial time.

    \item[\textit{Step 4.}] \textit{Equilibrium construction.} Consider the following strategy profile. Each firm plays $p^M$ in the collusive phase. If a deviation is detected (via the CDP algorithm), all firms switch to the punishment phase computed by the OPP algorithm for $L$ periods, then return to collusion. By the folk theorem for repeated games with imperfect public monitoring \citep{fudenberg1994}, this constitutes a perfect public equilibrium sustaining the collusive outcome provided the discount factor satisfies
    \[
        \delta \geq \delta^* \equiv \frac{\pi_i^D - \pi_i^M}{\pi_i^D - \pi_i^P},
    \]
    where $\pi_i^D$ is the one-shot deviation payoff, $\pi_i^M$ the collusive payoff, and $\pi_i^P$ the punishment payoff. Since $\PP = \NP$ allows exact computation of the optimal punishment (Step~3), $\pi_i^P$ is minimized, yielding the smallest possible $\delta^*$. Note that $\delta^*$ depends on the payoff structure of the market game but not on $|\Gamma|$ directly. Under demand structures where the ratio of deviation gains $(\pi_i^D - \pi_i^M)$ to punishment severity $(\pi_i^D - \pi_i^P)$ is uniformly bounded as $|\Gamma|$ grows (as holds for linear, logit, and CES demand), $\delta^*$ remains bounded away from~$1$.\footnote{Since the game has imperfect public monitoring (demand states are unobserved), the natural solution concept is perfect public equilibrium (PPE) \citep{fudenberg1994} rather than subgame-perfect equilibrium (SPE). The collusive equilibrium constructed here is a PPE. The unsustainability result in Direction~2 holds for the broader class of SPE, making it stronger: collusion fails even if firms condition on private histories.}

    The key step is that credible punishment requires \textit{detection}. In standard models, detection is assumed to be costless (firms observe noisy signals and can process them). Here, I show that when the signal-extraction problem is computationally hard, detection fails; only $\PP = \NP$ restores it.
\end{enumerate}

\medskip
\noindent\textbf{Direction 2: $\PP \neq \NP$ $\Rightarrow$ Collusion is unsustainable.}

\smallskip
The argument relies on the following key lemma.

\begin{lemma}[Existence of Profitable Undetectable Deviations]
\label{lem:deviation}
Under Assumptions~\ref{ass:products}--\ref{ass:hardness}, for any collusive strategy $\sigma$, each firm $i$ has a deviation $\hat{p}_i \neq \sigma_i(\theta)$ such that:
\begin{enumerate}
    \item[(a)] $\E[\pi_i(\hat{p}_i, \sigma_{-i}, \theta)] > \E[\pi_i(\sigma_i, \sigma_{-i}, \theta)]$; and
    \item[(b)] No polynomial-time algorithm can distinguish the observed data $((\hat{p}_i, p_{-i}^M), q^t)$ from data generated under full compliance, with probability exceeding $1/2 + \nu(|\Gamma|)$.
\end{enumerate}
\end{lemma}

\begin{proof}
For part (a): since $p^M$ maximizes \textit{joint} profits, it is generically not a best response for any individual firm. Each firm can increase its own profit by shading its price toward the competitive best response. Formally, $\nabla_{p_i} \pi_i(p^M) \neq 0$ generically, so there exists a direction of profitable deviation for firm $i$.

For part (b): since $|\Theta| \geq NK + 1$ (Assumption~\ref{ass:richness}), the collusive strategy maps demand states to an $(NK)$-dimensional price space, and with noise ($\sigma^2 > 0$), the observed price-quantity vector lies in a region consistent with multiple demand states. A deviation $\hat{p}_i$ close to $\sigma_i(\theta)$ for some $\theta$ produces data that could plausibly arise under an alternative demand state $\theta'$ with compliant play. Formally, for any $\hat{p}_i$ in the convex hull of $\{\sigma_i(\theta) : \theta \in \Theta\}$, there exist weights $\lambda(\theta)$ such that $\hat{p}_i = \sum_\theta \lambda(\theta) \sigma_i(\theta)$. On the quantity side, the noise $\varepsilon$ (with variance $\sigma^2 > 0$) absorbs the discrepancy between the observed quantities and the quantities predicted under any candidate state $\theta'$: for $\sigma^2$ sufficiently large, the likelihood ratio between the deviation and any compliant state is bounded, so quantities do not reveal the deviation. Determining whether consistent weights exist is a feasibility problem over the demand state space, which is an instance of CDP for the market game $\Gamma$. By Assumption~\ref{ass:hardness}, this instance cannot be solved in polynomial time with probability exceeding $1/2 + \nu(|\Gamma|)$.
\end{proof}

Suppose $\PP \neq \NP$. Then:

\begin{enumerate}
    \item[\textit{Step 1.}] \textit{Detection fails.} By Assumption~\ref{ass:hardness}, no polynomial-time algorithm correctly solves CDP for the market game $\Gamma$ with probability exceeding $1/2 + \nu(|\Gamma|)$. (This assumption is non-vacuous precisely when $\PP \neq \NP$; if $\PP = \NP$, all NP problems are solvable in polynomial time and the assumption cannot hold.)

    \item[\textit{Step 2.}] \textit{Punishment is not credible.} A punishment strategy is credible only if deviations that trigger it are detectable. If firm $i$ can deviate from $p^M$ in a way that is computationally indistinguishable from compliance (i.e.,\ the deviation produces observed data consistent with some demand state under the collusive strategy), then the deviation goes undetected and unpunished.

    \item[\textit{Step 3.}] \textit{Profitable undetectable deviations exist.} By Lemma~\ref{lem:deviation}, each firm has a profitable deviation that is computationally indistinguishable from compliance. The key condition is that the demand state space is rich enough ($|\Theta| \geq NK + 1$) that observed data admits multiple consistent explanations, and noise ensures that deviations cannot be detected with certainty from price-quantity data alone.

    \item[\textit{Step 4.}] \textit{Collusion unravels.} Since profitable, undetectable deviations exist and punishment is non-credible, the repeated game offers no advantage over the stage game: the threat of future punishment cannot discipline current behavior. The unique Nash equilibrium of the stage game is the competitive outcome $p^*$, which is therefore the only equilibrium of the repeated game.
\end{enumerate}

\medskip
\noindent\textbf{Convergence to $p^*$ in part~(b).} In the stage game, $p^*$ is the unique Nash equilibrium under Assumptions~\ref{ass:products}--\ref{ass:costs} (strict concavity of profit functions and dominant diagonal in the demand system ensure uniqueness). Since collusion is the only alternative equilibrium outcome in the repeated game, and sustaining any supra-competitive outcome requires detection capability that is computationally infeasible under Assumption~\ref{ass:hardness}, the collapse of collusion implies convergence to $p^*$.
\end{proof}

\subsection{The Efficiency--Competition Impossibility}

Combining the main result with \citet{maymin2011} yields a fundamental impossibility.

\begin{corollary}[Efficiency--Competition Impossibility]
\label{cor:impossibility}
No market can simultaneously be:
\begin{enumerate}
    \item[(i)] Informationally efficient (prices reflect all available information), and
    \item[(ii)] Competitive (prices converge to marginal cost).
\end{enumerate}
\end{corollary}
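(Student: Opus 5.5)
The plan is a case split on the status of $\PP$ versus $\NP$, combining Theorem~\ref{thm:main} with the equivalence of \citet{maymin2011}. That paper gives: markets are informationally efficient if and only if $\PP = \NP$. Theorem~\ref{thm:main} gives the other half. Under part~(a), $\PP = \NP$ makes the collusive outcome $p^M$ sustainable as a perfect public equilibrium. Under part~(b), together with Assumption~\ref{ass:hardness}, $\PP \neq \NP$ drives the market to the competitive outcome $p^*$. The corollary should then follow by cases on whether $\PP = \NP$.

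\textbf{Case $\PP = \NP$.} Suppose a market satisfies (i). Then we may invoke \citet{maymin2011} in the direction "efficiency requires $\PP = \NP$," so we are in this case. Theorem~\ref{thm:main}(a) then supplies a perfect public equilibrium attaining $p^M$. By Definition~\ref{def:collusive}, $p^M$ yields joint profits strictly above those at $p^*$, so prices sit above marginal cost and (ii) fails.

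\textbf{Case $\PP \neq \NP$.} Here \citet{maymin2011} says (i) cannot hold. So (i) and (ii) never hold together.

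The main obstacle is in the first case. Theorem~\ref{thm:main}(a) only asserts that a collusive equilibrium \emph{exists}, whereas "competitive" in (ii) is a statement about where prices actually converge. The step from (a) to "the market is not competitive" therefore needs an equilibrium-selection premise. I would supply it by restricting to the Pareto-dominant equilibrium from the firms' perspective, which is the interpretation the paper uses in the introduction. A second gap is that efficiency in \citet{maymin2011} and competitiveness here are defined over different market primitives. I would handle this by reading the corollary for markets that satisfy both papers' hypotheses at once (Assumptions~\ref{ass:products}--\ref{ass:costs}, with $|\Gamma|$ large and $\delta$ near~1), and by stating the result as a joint consequence under those hypotheses.

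Stated plainly, the argument only delivers the following. Under the modeling assumptions, (i) implies $\PP = \NP$, and under equilibrium selection by the firms, $\PP = \NP$ implies that (ii) fails. It does not give an unconditional impossibility, and I would flag that scope explicitly.
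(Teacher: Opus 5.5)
Your argument is correct under the premises you flag, but it uses a different half of Theorem~\ref{thm:main} than the paper does.

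\textbf{The paper's route.} The paper takes ``efficiency requires $\PP=\NP$'' from \citet{maymin2011}, exactly as you do. It then cites Theorem~\ref{thm:main}(b) for ``competition requires $\PP\neq\NP$'' and restricts the conclusion to markets satisfying Assumption~\ref{ass:hardness}. There is no case split and no equilibrium selection. But (b) states the converse ($\PP\neq\NP$ implies convergence to $p^*$). The implication ``competitive $\Rightarrow \PP\neq\NP$'' is the contrapositive of ``$\PP=\NP \Rightarrow$ not competitive,'' which is what you extract from part~(a). The paper's version still goes through, but only because Assumption~\ref{ass:hardness} cannot hold when $\PP=\NP$, as the paper itself notes in Step~1 of Direction~2. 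So within its stated scope, \citet{maymin2011} alone already rules out efficiency, and the competition half does no logical work.

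\textbf{Your route.} Your argument gives the competition half real content and does not need Assumption~\ref{ass:hardness}. It pays for this in two ways. First, you need the explicit equilibrium-selection premise, because (a) only asserts that a collusive PPE exists, while playing $p^*$ every period remains an equilibrium. Second, you inherit the restriction to $|\Gamma|$ large and $\delta$ near~$1$. The paper's route needs neither, but only by confining the corollary to markets where it is essentially immediate.

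Your caveat that the two papers' notions of a ``market'' must be reconciled applies equally to the paper's proof, which does not address it.
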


\begin{proof}
By \citet{maymin2011}, informational efficiency requires $\PP = \NP$. By Theorem~\ref{thm:main}(b), under Assumption~\ref{ass:hardness}, competition requires $\PP \neq \NP$. Since $\PP = \NP$ and $\PP \neq \NP$ are mutually exclusive, no market satisfying Assumption~\ref{ass:hardness} can be both efficient and competitive.
\end{proof}

\begin{remark}
\label{rem:gs}
Corollary~\ref{cor:impossibility} can be understood as a computational generalization of the Grossman--Stiglitz paradox \citep{grossman1980}. Grossman and Stiglitz showed that if prices are fully informative, no trader has an incentive to acquire costly information, so prices cannot be fully informative, a logical contradiction. The present result is stronger: it shows that the computational requirements for efficiency and competition are not merely in tension but are \textit{logically incompatible}.
\end{remark}

\subsection{The Transparency Paradox}

A counterintuitive implication concerns market transparency.

\begin{corollary}[Transparency Paradox]
\label{cor:transparency}
Increasing market transparency (making prices, quantities, and transaction-level data more readily available) facilitates collusion by reducing the computational complexity of the Collusion Detection Problem.
\end{corollary}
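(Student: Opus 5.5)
The plan is to treat transparency as a restriction of the set of hidden variables that the Collusion Detection Problem must search over. I model an increase in transparency as making public some component of the demand state or the noise: part of $\theta$ becomes observable, or additional transaction-level signals shrink the admissible noise set. Formally, I partition $\theta = (\theta_{\mathrm{obs}}, \theta_{\mathrm{hid}})$ and give the observed block to all firms as part of the CDP instance. Under full transparency, $\theta_{\mathrm{hid}}$ is empty, which contradicts Assumption~\ref{ass:info} and is exactly the regime the richness condition, Assumption~\ref{ass:richness}, was meant to exclude.

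First I show monotonicity. Any CDP instance with more public information reduces, in polynomial time, to an instance with less public information, simply by ignoring the extra data. So detection can only become easier: the set of consistent pairs $(\theta,\varepsilon)$ with $p^t = \sigma(\theta)$ shrinks. Second, I show the decrease can be strict, reusing the reduction in Theorem~\ref{thm:cdp-hard}. There, detection amounts to searching $\{0,1\}^n$ for a satisfying assignment. If $n - O(\log |\Gamma|)$ of the variables are revealed, the residual search space has polynomial size, and brute-force enumeration decides CDP in polynomial time. Assumption~\ref{ass:hardness} then fails for that family, even if $\PP \neq \NP$.

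Third, once detection becomes tractable, Steps~2--4 of Direction~1 in the proof of Theorem~\ref{thm:main} go through for the transparent market. The CDP-based trigger strategy now detects deviations, and the folk-theorem construction yields a perfect public equilibrium at $p^M$ for $\delta \geq \delta^*$. This still needs CSP and OPP to be solvable, so I restrict the claim to markets where those two problems are easy, for instance the pooling structures of the CDP reduction. In such markets, greater transparency moves the market from the regime of Theorem~\ref{thm:main}(b) into the collusive regime, which is the claimed facilitation.

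I expect the main obstacle to be making ``facilitates collusion'' precise without assuming $\PP = \NP$. The second step only shows that enough transparency destroys instance hardness. Partial transparency need not, since revealing a constant fraction of the variables still leaves an NP-hard residual instance. I would therefore state the corollary as a comparative static: the set of hard instances is weakly decreasing in the information revealed, and it is empty once the hidden part has logarithmic description size. I would also note that quantity-side transparency, modeled as lowering $\sigma^2$, enters through part~(b) of Lemma~\ref{lem:deviation}, because lower noise breaks the bounded-likelihood-ratio argument there.
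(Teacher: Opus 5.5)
\paragraph{The gap is your Step~1.} ``Ignoring the extra data'' is not a reduction from the transparent CDP to the opaque one. The opaque instance asks whether \emph{some} full state explains $(p^t,q^t)$, so it is the disjunction over all values of $\theta_{\mathrm{obs}}$. It can therefore answer \textsc{Yes} where the transparent instance answers \textsc{No}.

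The claimed monotonicity is in fact false, not just unproved. Shrinking the set of consistent pairs $(\theta,\varepsilon)$ does not make an existential search easier, because it can delete the easy witnesses. For a counterexample, take the construction of Theorem~\ref{thm:cdp-hard} for a 3-SAT formula $\varphi$, add one binary coordinate $y$ to the state, and set firm $j$'s demand to $\bar{d}+\mathbf{1}[C_j\vee y \text{ is satisfied by } \theta]$. On the observation $q_j=\bar{d}+1+\varepsilon_j$, the opaque CDP is trivially \textsc{Yes} (put $y=1$). Once transparency reveals $y=0$, it is exactly the satisfiability of $\varphi$.

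The valid reduction hardwires $\theta_{\mathrm{obs}}$ into the compact description of $D$. That yields an opaque CDP instance of a \emph{different}, restricted game. It keeps the transparent problem in NP but says nothing about a fixed market becoming more transparent. So the first half of your fallback comparative static (``the set of hard instances is weakly decreasing in the information revealed'') cannot be established. Monotonicity does hold for a different objective: the best polynomial-time accuracy at detecting \emph{actual} deviations, since a detector there really can discard inputs. That is not the decision problem in Assumption~\ref{ass:hardness}.

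\paragraph{What survives, and how it compares with the paper.} Step~2 survives, as a threshold result rather than a monotone one. If only $O(\log|\Gamma|)$ bits of $\theta$ remain hidden, brute force over the residual states decides the bounded-noise CDP in polynomial time, since each candidate is checkable in polynomial time.

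This is where your route genuinely differs from the paper's. The paper models transparency as lowering $\sigma^2$, notes that the ambiguity set $\mathcal{A}(\sigma^2)$ shrinks, and reads the difficulty of CDP off $|\mathcal{A}(\sigma^2)|$. That is the same inference that fails in your Step~1: a unique consistent state is not thereby easy to find. The noise channel also does not reach the source of hardness, because the CDP instances built in Theorem~\ref{thm:cdp-hard} remain NP-hard with no noise at all (observe $q_j=\bar{d}+1$ exactly). Bounding the size of the residual search space, as you do, is what actually delivers a complexity bound. Your modelling of transparency is therefore the better-targeted one, but it supports only ``enough transparency facilitates collusion,'' not the monotone claim in the statement.

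\paragraph{The example in Step~3 does not work.} In the pooling game of Theorem~\ref{thm:cdp-hard}, demand does not depend on prices, so joint profit is unbounded and $p^M$ does not exist. Moreover, any deviation from the pooled price is visible from prices alone. You need to exhibit, or assume, a market with a well-defined $p^M$, tractable CSP and OPP, and a CDP that is hard only before the revelation.
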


\begin{proof}
Define the \textit{ambiguity set} at noise level $\sigma^2$ as
\[
    \mathcal{A}(\sigma^2) = \left\{\theta \in \Theta : \Pr_\varepsilon\!\left[\|q^t - D(p^t, \theta)\| \leq r(\sigma^2)\right] > \eta \right\},
\]
the set of demand states consistent with observed data within the noise tolerance $r(\sigma^2)$, where $\eta > 0$ is a fixed likelihood threshold. Since the noise $\varepsilon$ has variance $\sigma^2$, the tolerance $r(\sigma^2)$ is increasing in $\sigma^2$, and the ambiguity set satisfies $|\mathcal{A}(\sigma_1^2)| \leq |\mathcal{A}(\sigma_2^2)|$ whenever $\sigma_1^2 \leq \sigma_2^2$: higher precision shrinks the set of consistent states.

The computational difficulty of CDP is determined by the search over $\mathcal{A}(\sigma^2)$. When $|\mathcal{A}(\sigma^2)| = 1$, the demand state is uniquely identified, deviations are immediately visible, and CDP is trivially solvable. When $|\mathcal{A}(\sigma^2)| = M = |\Theta|$, all demand states remain consistent, and CDP requires searching the full state space, which is hard under Assumption~\ref{ass:hardness}. Increasing transparency (decreasing $\sigma^2$) monotonically reduces $|\mathcal{A}(\sigma^2)|$ and hence the difficulty of CDP, weakening Assumption~\ref{ass:hardness} and facilitating collusion.
\end{proof}

\begin{remark}
\label{rem:transparency}
This result challenges the conventional regulatory wisdom that transparency promotes competition. The European Commission's push for algorithmic transparency, the SEC's emphasis on trade reporting, and real-time pricing mandates may inadvertently facilitate the very collusion they seek to prevent. The mechanism is computational: transparency reduces the information-processing burden that currently prevents firms from monitoring each other's compliance with tacit agreements.
\end{remark}

%=============================================================================
\section{Extensions}
\label{sec:extensions}
%=============================================================================

\subsection{The AI transition}
\label{sec:ai-transition}

In practice, firms do not jump discontinuously from $\PP \neq \NP$ to $\PP = \NP$. AI systems expand computational capabilities gradually. I model this as firms having access to algorithms that solve problems of size up to $s$, where $s$ grows with AI capability (see Figure~\ref{fig:transition}).

\begin{definition}[Computational Capacity]
\label{def:capacity}
Firm $i$ has computational capacity $s_i \in \N$, meaning it can solve any instance of CDP (and CSP, OPP) of size at most $s_i$ in unit time. For instances of size $|\Gamma| > s_i$, the firm cannot solve the problem.
\end{definition}

\begin{proposition}[Regime Shifts]
\label{prop:phases}
As the minimum computational capacity $\underline{s} = \min_i s_i$ increases, the market passes through three regimes:
\begin{enumerate}
    \item[(i)] \textbf{Competitive regime} ($\underline{s} < s^*$): No firm can solve CDP for the actual market. Collusion is unsustainable. Prices converge to $p^*$.
    \item[(ii)] \textbf{Unstable regime} ($s^* \leq \underline{s} < s^{**}$): Some firms can partially solve CDP. Collusion is intermittent, with price wars triggered by detection failures. Market outcomes oscillate between $p^*$ and $p^M$.
    \item[(iii)] \textbf{Collusive regime} ($\underline{s} \geq s^{**}$): All firms can solve CDP for the actual market. Collusion is sustainable. Prices converge to $p^M$.
\end{enumerate}
The thresholds $s^*$ and $s^{**}$ depend on market complexity $|\Gamma|$.
\end{proposition}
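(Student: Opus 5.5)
The plan is to prove Proposition~\ref{prop:phases} by specializing the two directions of Theorem~\ref{thm:main} to the capacity model of Definition~\ref{def:capacity}. The cutoffs are defined directly from the monitoring technology. Let $s^{**} = |\Gamma|$, the size of the actual CDP instance together with the associated CSP and OPP instances, all of which are polynomial in $|\Gamma|$ under the compact specification. Let $s^*$ be the size of the smallest nontrivial CDP subinstance, meaning the restriction of the detection problem to a proper sub-collection of firms and products that a firm can check for consistency. If the subinstances must have size at least some fixed fraction of $|\Gamma|$, then $s^*$ and $s^{**}$ both depend only on $|\Gamma|$, as the statement requires, and the ordering $s^* \le s^{**}$ holds by construction.

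\emph{Regime (iii).} When $\underline{s} \ge s^{**}$, every firm solves CSP, CDP and OPP for $\Gamma$ in unit time. This replaces the hypothesis $\PP = \NP$ used in Steps~1--3 of Direction~1. I then rerun Step~4 verbatim: trigger strategies with OPP-computed punishments form a PPE for $\delta \ge \delta^*$, so prices are $p^M$ on the equilibrium path.

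\emph{Regime (i).} When $\underline{s} < s^*$, the binding firm, the one with capacity $\underline{s}$, cannot solve any informative part of CDP. I would show that this restores the conclusion of Lemma~\ref{lem:deviation}(b). The steps are as follows.
\begin{enumerate}
\item The firm's detection verdict is independent of the data on the hard instance, so its accuracy is at most $1/2$ under the pooling construction of Theorem~\ref{thm:cdp-hard}.
\item Lemma~\ref{lem:deviation}(a) then supplies a profitable deviation that this firm cannot detect.
\item Steps~2--4 of Direction~2 unravel collusion to the unique stage equilibrium $p^*$.
\end{enumerate}
One subtlety must be handled: the deviation has to go undetected by the \emph{rivals}, not merely by the firm with capacity $\underline{s}$. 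I would handle it by noting that an equilibrium requires every firm's punishment threat to be credible, so it suffices that one monitor fails. Concretely, the weakest firm cannot verify compliance, so it cannot condition its own play on detection. Then either it shades its own price undetectably by rivals who are still uncertain, or the trigger structure is not common knowledge and falls apart.

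\emph{Regime (ii).} This is the main obstacle, because ``oscillate'' must be given content. I would model partial solvability as detection with error: a firm of capacity $s \in [s^*, s^{**})$ checks a random subinstance of size $s$, which yields false-negative and false-positive rates $\alpha(s), \beta(s)$ that lie strictly in $(0,1)$ and decrease in $s$. The result of \citet{green1984} then gives equilibria in which false positives trigger punishment phases on the equilibrium path, so realized prices alternate between $p^M$ and the punishment or competitive level. I would verify the incentive constraint $\delta \ge \delta^*(\alpha,\beta)$, where the right-hand side is $\delta^*$ adjusted for imperfect detection. Making $\alpha,\beta$ monotone in $s$ rigorously is where I expect genuine difficulty, and I would state it as a modelling assumption if necessary.
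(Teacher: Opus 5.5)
Your skeleton is the paper's own: Direction~1 of Theorem~\ref{thm:main} for (iii), Direction~2 for (i), partial detection for (ii). Regime (iii) goes through exactly as there. But both places where you go beyond the paper's brief sketch break.

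\textbf{Regime (i).} The claim ``it suffices that one monitor fails'' gets deterrence backwards. To deter firm $i$ it is enough that \emph{some} rival detects $i$'s deviations. A capable rival's CDP verdict is a function of public data, and its punishment is played out in publicly observed prices. So the weak firm need not solve CDP at all: it follows the punishers, and Nash reversion is self-enforcing. Nor is the weak firm's own shading hidden from ``rivals who are still uncertain.'' A rival with $s_j \geq s^{**}$ solves CDP exactly, and nothing stops strategies and capacities from being common knowledge. Hence with $\underline{s} < s^*$ but two firms at or above $s^{**}$ (so that they monitor each other), collusion survives, as the paper itself asserts in Proposition~\ref{prop:asymmetric}(a). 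Item (i) indexed by $\min_i s_i$ therefore cannot be derived by any argument. The paper's proof simply asserts that ``no firm can detect deviations,'' which is a hypothesis on $\max_i s_i$ (or on symmetric capacities), and that is the form you should prove.

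Separately, your steps 1 and 2 do not connect. The hard instances of Theorem~\ref{thm:cdp-hard} are data in which every firm posted the pooling price. The profitable deviation of Lemma~\ref{lem:deviation}(a) changes a price, so under a pooling $\sigma$ it gives $p^t \neq \sigma(\theta)$ for every $\theta$ and is visible from prices alone.

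\textbf{Regime (ii).} Checking a sub-collection of firms and products is a relaxation of the CDP consistency question, so it fires only when the full check would. Under compliance the realized $(\theta^t,\varepsilon^t)$ is a witness, provided the admissible noise set contains the realized noise, giving $\beta(s)=0$ rather than $\beta(s)\in(0,1)$. For any tolerance-based check, coupling nested restrictions shows $\beta(s)$ is nondecreasing in $s$, not decreasing. The on-path price wars in \citet{green1984} are produced precisely by false positives. Your construction therefore yields one of two outcomes:
\begin{itemize}
\item no on-path punishments at all (stable collusion at $p^M$ if $1-\alpha(s)$ meets the incentive constraint, no collusion otherwise); or
\item the \emph{most} price wars in regime (iii), where your trigger is the full check, contradicting convergence to $p^M$ there.
\end{itemize}
To confine oscillation to the intermediate regime, you need punishments triggered on path by actual deviations. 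That is the paper's ``testing the boundaries'' mixed equilibrium, which you would have to construct explicitly (the paper only asserts it).
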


\begin{proof}
When $\underline{s} < s^*$, no firm can detect deviations, and the argument of Direction~2 in Theorem~\ref{thm:main} applies. When $\underline{s} \geq s^{**}$, all firms can detect deviations, and Direction~1 applies. The intermediate regime follows from partial detection: firms can detect ``large'' deviations but not ``small'' ones, leading to a mixed equilibrium where firms occasionally test the boundaries of detection capability, triggering punishment phases.
\end{proof}

\begin{remark}
\label{rem:transition}
The AI transition is currently underway. Algorithmic pricing systems process millions of price points in real time, effectively solving CDP for simple markets (few products, stable demand). As AI systems grow more capable, the computational threshold $s^*$ is being crossed in progressively more complex markets. The empirical evidence of algorithmic collusion \citep{calvano2020,assad2024,fish2024} corresponds to early instances of this phase transition.
\end{remark}

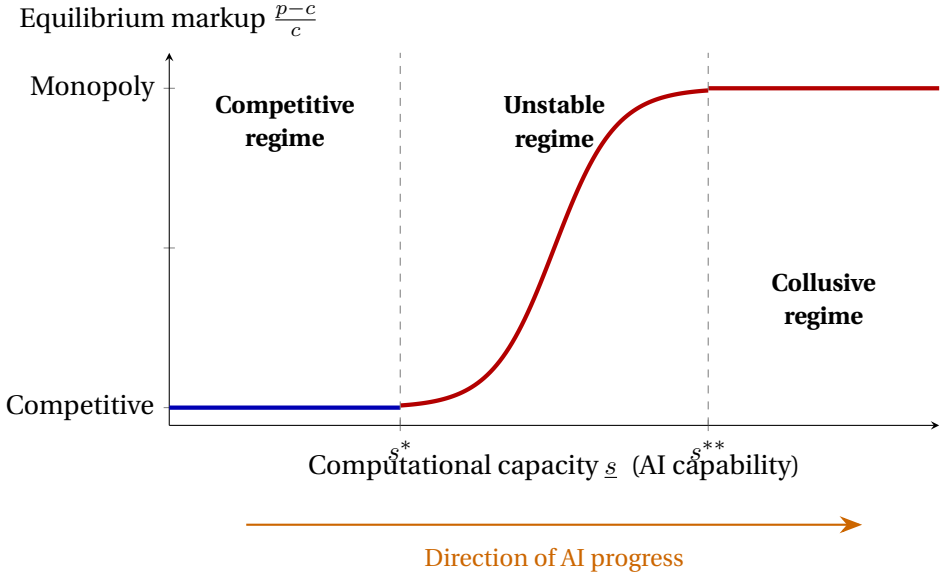
\begin{figure}[t]
\centering
\begin{tikzpicture}
\begin{axis}[
    width=0.85\textwidth,
    height=6.5cm,
    xlabel={Computational capacity $\underline{s}$\; (AI capability)},
    ylabel={Equilibrium markup $\frac{p - c}{c}$},
    xmin=0, xmax=10,
    ymin=0, ymax=1.05,
    xlabel style={at={(0.5,-0.05)}, anchor=north},
    xtick={3, 7},
    xticklabels={$s^*$, $s^{**}$},
    ytick={0.05, 0.5, 0.95},
    yticklabels={Competitive, , Monopoly},
    axis lines=left,
    every axis y label/.style={at={(ticklabel* cs:1.02)}, anchor=south},
    clip=false,
]
% Competitive regime
\addplot[blue!70!black, ultra thick, domain=0:3, samples=50] {0.05};
% Transition regime
\addplot[red!70!black, ultra thick, domain=3:7, samples=100] {0.05 + 0.9*(1/(1+exp(-2.5*(x-5))))};
% Collusive regime
\addplot[red!70!black, ultra thick, domain=7:10, samples=50] {0.95};
% Regime labels
\node[align=center, font=\small] at (axis cs:1.5, 0.85) {\textbf{Competitive}\\\textbf{regime}};
\node[align=center, font=\small] at (axis cs:5, 0.85) {\textbf{Unstable}\\\textbf{regime}};
\node[align=center, font=\small] at (axis cs:8.5, 0.35) {\textbf{Collusive}\\\textbf{regime}};
% Dashed vertical lines
\draw[dashed, gray] (axis cs:3, 0) -- (axis cs:3, 1.05);
\draw[dashed, gray] (axis cs:7, 0) -- (axis cs:7, 1.05);
% Arrow for AI progress
\draw[-{Stealth[length=3mm]}, thick, orange!80!black] (axis cs:1, -0.28) -- (axis cs:9, -0.28);
\node[font=\small, orange!80!black] at (axis cs:5, -0.38) {Direction of AI progress};
\end{axis}
\end{tikzpicture}
\caption{The AI transition. As firms' computational capacity increases, markets pass through three regimes. In the competitive regime ($\underline{s} < s^*$), firms cannot solve the collusion detection problem, and prices converge to marginal cost. In the unstable regime ($s^* \leq \underline{s} < s^{**}$), partial detection enables intermittent collusion. In the collusive regime ($\underline{s} \geq s^{**}$), full detection sustains monopoly pricing. AI advances push markets rightward along this curve.}
\label{fig:transition}
\end{figure}

\subsection{Heterogeneous computational capabilities}

When firms differ in computational capacity, asymmetric outcomes emerge.

\begin{proposition}[Asymmetric AI Adoption]
\label{prop:asymmetric}
Suppose firms $1, \ldots, n$ have computational capacity $s_i \geq s^{**}$ (``AI firms'') and firms $n+1, \ldots, N$ have capacity $s_j < s^*$ (``traditional firms''). Then:
\begin{enumerate}
    \item[(a)] AI firms can collude among themselves, sustaining supra-competitive prices in the products they sell.
    \item[(b)] Traditional firms cannot sustain collusion and price competitively.
    \item[(c)] The market equilibrium involves a two-tier pricing structure: AI firms earn monopoly rents while traditional firms earn competitive returns.
    \item[(d)] AI firms have a strategic incentive to increase market complexity (product differentiation, dynamic pricing) beyond the detection capacity of traditional firms, widening the computational gap.
\end{enumerate}
\end{proposition}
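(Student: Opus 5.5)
The plan is to reduce Proposition~\ref{prop:asymmetric} to Theorem~\ref{thm:main} and Proposition~\ref{prop:phases}, applied to a sub-game. First I would define the \emph{AI sub-market}: the restriction of $\Gamma$ to firms $1,\ldots,n$ and their $nK$ products. The prices of firms $n+1,\ldots,N$ enter this sub-market only as exogenous shifters of $D_{ik}$. I would fix those prices at their competitive levels $p^*_{-\{1..n\}}$, to be justified in step (b), and treat the induced sub-game as a market game $\Gamma_{AI}$ with description size at most $|\Gamma|$. Since $s_i \geq s^{**}$ for every AI firm and $s^{**}$ is monotone in size, each AI firm can solve CSP, CDP and OPP on $\Gamma_{AI}$. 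Direction~1 of Theorem~\ref{thm:main}, applied verbatim to $\Gamma_{AI}$, then gives a perfect public equilibrium in which the AI firms play the partial-cartel joint maximizer. Here deviations by AI firms are detected and punished for $L$ periods by the other AI firms. This proves (a), once I check that the partial cartel's joint profit strictly exceeds its competitive profit. That holds by the argument of Definition~\ref{def:collusive} restricted to $n \geq 2$ firms; I would add the hypothesis $n\geq 2$ explicitly.

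For (b), I would apply Direction~2 to any coalition containing a traditional firm $j$. Firm $j$ has $s_j < s^*$, so by Definition~\ref{def:capacity} and Assumption~\ref{ass:hardness} it cannot solve CDP for the instance it faces. Lemma~\ref{lem:deviation} then supplies, for any proposed collusive $\sigma$, a profitable deviation that is indistinguishable to $j$. Traditional firms therefore cannot serve as monitors. Symmetrically, firm $j$ itself has a profitable deviation. It remains to argue that this deviation is also undetected by the AI firms, or, if AI firms could detect it, that their punishment is not credible. The reason is that the AI cartel cannot punish $j$ without leaving the partial-cartel equilibrium, which costs it more than $j$'s deviation costs the cartel. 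So $j$ plays its competitive best response, which is computable by Proposition~\ref{prop:br-poly}.

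For (c), I would combine (a) and (b) into a fixed point. Traditional firms best-respond to the AI cartel's prices, and the cartel jointly best-responds to the traditional firms. Existence follows from continuity and concavity, as in the uniqueness argument at the end of Theorem~\ref{thm:main}. The profit comparison then gives the two-tier structure: AI firms earn rents above their stage-Nash level, while traditional firms earn their stage best-response profit. With product differentiation and linear demand, the traditional firms' profit stays close to competitive returns.

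For (d), I would use Proposition~\ref{prop:phases} comparatively. The thresholds $s^*(|\Gamma|)$ and $s^{**}(|\Gamma|)$ increase with complexity. So if the AI firms raise $|\Gamma|$ while keeping $s^{**}(|\Gamma|)\leq \min_{i\leq n}s_i$, their own collusion is preserved. At the same time, traditional firms whose capacity sat between the thresholds are pushed back below $s^*$, so they lose any partial ability to enter or police the cartel. This part is a revealed-preference argument: the cartel's payoff is weakly increasing in $|\Gamma|$ on this range.

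The main obstacle is step (b), specifically whether AI firms could detect and punish a traditional firm's deviation. The difficulty is that CDP is defined relative to the full market, not relative to who is monitoring. I would first try to resolve it by noting that the traditional firms never agree to any $\sigma$, so there is nothing to detect: their competitive pricing is simply the best response taken as given. If that framing is contested, I would fall back on the non-credibility-of-punishment argument.
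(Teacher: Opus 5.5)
Your proposal is correct at the paper's level of rigor and follows its route: the paper proves (a)--(c) by applying Direction~1 of Theorem~\ref{thm:main} to the AI-firm sub-game and Direction~2 to the traditional-firm sub-game (AI firms at $p^M$, traditional firms at $p^*$), and it proves (d) by noting that AI firms can push $|\Gamma|$ past $s^*$ while staying within $s^{**}$. Your refinements go beyond the paper: the explicit $n\geq 2$, the fixed point in place of the paper's assertion that traditional firms sit exactly at $p^*$, and the cross-monitoring worry in (b). The paper never considers AI firms policing traditional firms and implicitly adopts your primary framing, treating the traditional sub-game in isolation.
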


\begin{proof}
Parts (a)--(c) follow from Theorem~\ref{thm:main} applied separately to the AI-firm subgame and the traditional-firm subgame. AI firms with $s_i \geq s^{**}$ can solve CDP for the market among themselves, so Direction~1 of Theorem~\ref{thm:main} applies to their sub-coalition: collusion is sustainable. Traditional firms with $s_j < s^*$ cannot solve CDP, so Direction~2 applies: their collusion unravels. The two-tier structure follows from the resulting equilibrium: AI firms price at $p^M$ among themselves while traditional firms price at $p^*$. For part (d), note that AI firms can increase $|\Gamma|$ by introducing product variants or dynamic pricing schemes. If the resulting complexity exceeds $s^*$ for traditional firms but not $s^{**}$ for AI firms, the computational gap widens, reinforcing the two-tier structure.
\end{proof}

\subsection{Approximate collusion and bounded rationality}

Even if $\PP \neq \NP$, AI systems may approximately solve NP-hard problems using heuristics, achieving near-collusive outcomes.

\begin{definition}[$\alpha$-Approximate Collusion]
\label{def:approx}
An outcome is \textit{$\alpha$-approximately collusive} if joint profits satisfy $\sum_i \pi_i \geq \alpha \cdot \sum_i \pi_i(p^M)$ for $\alpha \in (0, 1]$.
\end{definition}

\begin{proposition}[Approximate Collusion]
\label{prop:approx}
Suppose firms have access to a polynomial-time \textit{probabilistic detection mechanism} $\mathcal{D}$ that, given observed data $(p^t, q^t)$, correctly classifies deviations from the collusive agreement with probability at least $\alpha \in (1/2, 1]$. Then for $\delta$ sufficiently close to $1$, firms can sustain $\alpha$-approximately collusive outcomes as a perfect public equilibrium.
\end{proposition}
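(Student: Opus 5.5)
The plan is to reduce the statement to a Green--Porter style trigger-strategy equilibrium with imperfect public monitoring. The public signal is the output of the detection mechanism $\mathcal{D}$, and its error probability $1-\alpha<1/2$ plays the role of the exogenous noise. Because $\mathcal{D}$ runs in polynomial time, every firm can compute it. Its output on $(p^t,q^t)$ is therefore a public signal $y^t\in\{\text{comply},\text{deviate}\}$, and the whole construction lives in the class of public strategies. Perfect public equilibrium is then the right solution concept, as in Direction~1 of Theorem~\ref{thm:main}. No call to a CSP or OPP oracle is needed. The collusive target and the punishment are fixed ex ante: the target is $p^M$ (or any polynomial-time computable approximation of it), and the punishment is reversion to the stage Nash outcome $p^*$, which Proposition~\ref{prop:br-poly} makes computable in polynomial time. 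Reversion to $p^*$ is automatically credible because it is a stage equilibrium.

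The strategy profile is the following. In the collusive phase all firms play $p^M$. If $\mathcal{D}$ outputs ``deviate'', all firms play $p^*$ for $L$ periods and then return to the collusive phase. Let $V$ denote the normalized collusive value. Along the equilibrium path a false alarm occurs with probability at most $1-\alpha$ per period, which gives the recursion
\[
V=(1-\delta)\pi^M_i+\delta\bigl[\alpha V+(1-\alpha)W\bigr],
\]
where $W$ is the value at the start of a punishment phase. If instead firm $i$ deviates, detection occurs with probability at least $\alpha$, so the one-shot deviation constraint reads
\[
(1-\delta)(\pi_i^D-\pi_i^M)\le \delta(2\alpha-1)(V-W).
\]
This is the step where $\alpha>1/2$ is essential. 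Since $2\alpha-1>0$, the detection gap is strictly positive. By choosing $L$ large and then letting $\delta\to1$, the right-hand side stays bounded below by a positive multiple of $\pi_i^M-\pi_i^*$, while the left-hand side vanishes. Incentive compatibility therefore holds for $\delta$ close to $1$.

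It remains to bound average joint profits. With $\delta\to1$ and $L$ fixed, the long-run fraction of periods spent colluding is $1/(1+(1-\alpha)L)$. Joint profit is then at least $\sum_i\pi_i(p^M)/(1+(1-\alpha)L)$, plus the competitive profit earned during punishment periods. This yields $\alpha$-approximate collusion, in the sense of Definition~\ref{def:approx}, provided $L$ can be taken as small as $L\le \alpha^{-1}$ (roughly) while incentive compatibility still holds.

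The main obstacle is this tension. A large $L$ helps incentives but wastes profit on false alarms. I would first take $L=1$ and push $\delta\to1$ so that the incentive constraint binds only through $2\alpha-1$. This gives a collusive share of $1/(2-\alpha)$, and since $1/(2-\alpha)\ge\alpha$ for all $\alpha\in(0,1]$, the bound is comfortably enough. If a single punishment period does not deter deviation, I would fall back on using the competitive profit $\pi^*>0$ earned during punishment periods to make up the gap. If that also fails, I would follow the Fudenberg--Levine--Maskin approach of transferring continuation value instead of burning it.
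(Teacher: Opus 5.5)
Your incentive step fails, and it is the step the whole construction rests on. Your own recursion $V=(1-\delta)\pi_i^M+\delta[\alpha V+(1-\alpha)W]$ rearranges to $\delta(1-\alpha)(V-W)=(1-\delta)(\pi_i^M-V)$. So $V-W$ is of order $1-\delta$ for every choice of $L$, whether fixed or growing with $\delta$, and it cannot stay bounded below by a multiple of $\pi_i^M-\pi_i^*$. (Directly: with $L$ fixed, $V-W=(1-\delta^L)(V-\pi_i^*)\to 0$.) Substituting into $(1-\delta)(\pi_i^D-\pi_i^M)\le\delta(2\alpha-1)(V-W)$, the factor $1-\delta$ cancels exactly and leaves the $\delta$-free condition
\[
(1-\alpha)(\pi_i^D-\pi_i^M)\le(2\alpha-1)(\pi_i^M-V).
\]
Patience therefore buys nothing when the false-alarm rate is bounded away from zero, and the hypothesis allows it to be as large as $1-\alpha$. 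This is the Green--Porter/Abreu--Milgrom--Pearce inefficiency. Three consequences follow:
\begin{itemize}
\item With $L=1$ and $\delta\to1$, the condition becomes $\pi_i^D-\pi_i^M\le(2\alpha-1)(\pi_i^M-\pi_i^*)/(2-\alpha)$.
\item Longer punishments raise the right side only up to $(2\alpha-1)(\pi_i^M-\pi_i^*)/(1-\alpha)$.
\item Summing over firms and imposing $\sum_i V_i\ge\alpha\sum_i\pi_i^M$ forces $\sum_i(\pi_i^D-\pi_i^M)\le(2\alpha-1)\sum_i\pi_i^M$. This fails for $\alpha$ near $1/2$ whenever a one-shot deviation is strictly profitable.
\end{itemize}
None of these conditions follows from $\alpha>1/2$.

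Your fallbacks do not repair this. Counting the competitive profit earned during punishments improves the profit bound, not the incentive constraint. The Fudenberg--Levine--Maskin transfer argument needs pairwise identifiability, which a binary comply/deviate signal lacks. Because the signal does not name the deviator, each firm's own constraint forces its continuation value down after ``deviate''. The joint continuation must then fall, so value can only be burned, never moved along the joint-profit frontier. The same tradeoff therefore binds for any perfect public equilibrium that enforces $p^M$ using $\mathcal{D}$'s output.

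Closing the argument requires an extra hypothesis, for example:
\begin{itemize}
\item a payoff condition like the one displayed;
\item a false-alarm rate small relative to the deviation gains; or
\item a mechanism that identifies the deviator.
\end{itemize}
Separately, a probabilistic $\mathcal{D}$ run with private coins gives each firm a private signal, so you need a common random seed for its output to be a public signal.

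The paper's proof does not supply the missing step either. It ignores on-path false alarms, and as $\delta\to1$ its displayed condition tends to $\pi_i^M\ge(1-\alpha)\pi_i^D+\alpha\pi_i^P$. That is again a restriction on payoffs rather than a consequence of $\alpha>1/2$, and its $\alpha$ profit bound is simply asserted. Your more careful bookkeeping is what exposes that the proposition as stated needs such an assumption.
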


\begin{proof}
Replace exact detection in the equilibrium construction of Theorem~\ref{thm:main}(a) with the probabilistic mechanism $\mathcal{D}$. A deviating firm is detected and punished with probability at least $\alpha$, and escapes detection with probability at most $1 - \alpha$. The incentive compatibility condition becomes
\[
    \pi_i^M \geq (1 - \alpha)\, \pi_i^D + \alpha\!\left[\pi_i^D + \frac{\delta}{1-\delta}\, \pi_i^P\right] \cdot \frac{1-\delta}{\delta},
\]
which is satisfied for $\delta$ sufficiently close to $1$ whenever $\alpha > 1/2$. The achievable collusive profit is bounded below by $\alpha \cdot \sum_i \pi_i(p^M)$ because the imperfect detection limits the severity of credible punishment. The equilibrium construction follows the folk theorem with imperfect public monitoring \citep{fudenberg1994}, where the signal structure is determined by $\mathcal{D}$.
\end{proof}

\begin{remark}
\label{rem:approx}
This extension is practically significant. The NP-hardness of exact CDP does not protect competition if approximate detection suffices, and modern AI systems are powerful approximate solvers. Large language models and deep reinforcement learning agents learn effective detection heuristics from market data, achieving $\alpha$ values well above $1/2$ in real markets. The competitive boundary is not a knife-edge; it is a gradient that AI is steadily climbing.
\end{remark}

%=============================================================================
\section{Empirical implications}
\label{sec:empirical}
%=============================================================================

The theory generates several testable predictions.

\begin{enumerate}
    \item[\textbf{P1.}] \textbf{AI adoption and markups.} Industries with higher AI adoption in pricing should exhibit higher markups and lower price dispersion, controlling for concentration and other standard determinants of competition. This is consistent with \citet{assad2024}, who find that algorithmic pricing in German gasoline markets increased margins by 9\%.

    \item[\textbf{P2.}] \textbf{Market complexity and competition.} Controlling for AI adoption, markets with greater complexity (more products, more volatile demand, less transparency) should be more competitive, because the CDP is harder to solve. This generates the novel prediction that \textit{complexity protects competition}: firms in complex markets are harder to monitor, making collusion unsustainable even with AI.

    \item[\textbf{P3.}] \textbf{Regime shifts.} As AI capabilities increase (measured by model size, inference speed, or data access), markets should exhibit rapid transitions from competitive to collusive pricing as computational thresholds are crossed. These transitions should be detectable as structural breaks in price-cost margin time series.

    \item[\textbf{P4.}] \textbf{Transparency and prices.} Mandated increases in market transparency (e.g.,\ real-time price reporting) should lead to \textit{higher} equilibrium prices, not lower, by facilitating AI-mediated collusion detection. This prediction runs counter to conventional regulatory expectations.

    \item[\textbf{P5.}] \textbf{Collusion without communication.} AI-mediated collusion should be observable even without any communication between firms: no shared algorithms, no data pooling, no third-party pricing services. The computational capability alone is sufficient. This distinguishes computational collusion from traditional conspiracy, which requires coordination.
\end{enumerate}

\begin{remark}
\label{rem:antitrust}
Prediction P5 poses a fundamental challenge for antitrust enforcement. If collusion emerges from computational capability rather than communication, then the legal standard requiring evidence of ``agreement'' or ``conspiracy'' is structurally inadequate. The theory suggests that antitrust law must shift from prosecuting collusive \textit{intent} to regulating collusive \textit{capacity}, a fundamental reorientation of competition policy.
\end{remark}

%=============================================================================
\section{Policy implications: computational antitrust}
\label{sec:policy}
%=============================================================================

If computational complexity is what maintains competition, then regulators have a new lever: the computational structure of markets themselves.

\subsection{Market complexity as a policy variable}

Traditional competition policy focuses on market structure (number of firms, barriers to entry) and conduct (price-fixing, bid-rigging). The present theory adds a third dimension: \textit{market complexity}, the computational difficulty of the collusion detection problem as a function of market design.

\begin{definition}[Market Complexity Index]
\label{def:mci}
The \textit{complexity index} of a market game $\Gamma$ is the computational complexity of its associated CDP, measured as the size of the smallest instance for which no polynomial-time algorithm achieves detection accuracy exceeding $1/2$.
\end{definition}

Regulators can increase market complexity through several mechanisms:

\begin{enumerate}
    \item[(i)] \textbf{Product differentiation incentives.} Encouraging product variety and customization increases the dimensionality of the pricing problem, making collusion detection harder.

    \item[(ii)] \textbf{Demand opacity.} Introducing randomness or opacity into demand information (e.g.,\ stochastic auctions, randomized procurement) increases the noise in observed market data, raising the computational cost of disentangling demand shocks from deviations.

    \item[(iii)] \textbf{Asynchronous pricing.} Requiring staggered pricing updates (rather than simultaneous price changes) breaks the temporal structure that AI systems exploit for real-time detection.

    \item[(iv)] \textbf{Algorithmic diversity mandates.} Requiring firms to use diverse, independently developed pricing algorithms reduces the implicit coordination that arises when similar algorithms converge to similar strategies.
\end{enumerate}

\subsection{The Regulatory Trilemma}

The Efficiency--Competition Impossibility (Corollary~\ref{cor:impossibility}) implies a trilemma for policymakers. They must choose at most two of three desirable market properties:

\begin{enumerate}
    \item[(a)] \textbf{Efficiency}: Prices reflect all available information.
    \item[(b)] \textbf{Competition}: Prices converge to marginal cost.
    \item[(c)] \textbf{AI integration}: Firms use computationally powerful AI systems.
\end{enumerate}

\noindent With AI integration, markets become efficient (prices quickly incorporate information) but collusive (AI enables detection and punishment). Without AI, markets can be competitive but informationally inefficient. Achieving both efficiency and competition requires limiting AI, at the cost of the productivity gains that AI provides.

This trilemma formalizes the emerging policy tension: governments simultaneously promote AI adoption for economic growth and worry about AI-mediated market manipulation. The theory shows that this tension is not a failure of policy design but a \textit{mathematical necessity}.

%=============================================================================
\section{Discussion}
\label{sec:discussion}
%=============================================================================

\subsection{Relationship to the original P = NP result}

\citet{maymin2011} established: \textit{Efficiency} $\iff$ $\PP = \NP$. The present paper establishes: \textit{Competition} $\iff$ $\PP \neq \NP$. Together:

\begin{table}[h]
\centering
\caption{The Efficiency--Competition Impossibility. Markets can be efficient (top-left) or competitive (bottom-right), but not both. AI pushes markets from the bottom-right cell toward the top-left cell.}
\label{tab:impossibility}
\begin{tabular}{@{}lcc@{}}
\hline
& $\PP = \NP$ & $\PP \neq \NP$ \\
\hline
Market efficiency & \checkmark & $\times$ \\
Market competition & $\times$ & \checkmark \\
\hline
\end{tabular}
\end{table}

Since $\PP \neq \NP$ is widely believed (and most of cryptography, including blockchain and digital signatures, relies on this assumption), the combined prediction is: \textit{markets are competitive but not efficient.} This is broadly consistent with the empirical evidence: markets exhibit persistent mispricing \citep{shiller2000,baker2007} alongside robust competition in most sectors.

The AI revolution is changing this equilibrium. By pushing the effective computational boundary toward $\PP = \NP$ for market-relevant problems, AI is shifting markets from the bottom-right cell (competitive, inefficient) toward the top-left cell (efficient, collusive). The question for society is whether the efficiency gains are worth the loss of competition.

\subsection{Why AI collusion is different from human collusion}

Traditional collusion is fragile because humans face cognitive limitations: bounded memory, emotional responses (anger at defectors, temptation to cheat), and inability to process complex multivariate data in real time. These limitations make the CDP effectively unsolvable for human managers, even in moderately complex markets.

AI agents face none of these limitations. They can:
\begin{itemize}
    \item Process the complete history of price-quantity data in milliseconds.
    \item Detect subtle patterns in competitor behavior across thousands of products.
    \item Compute optimal punishment strategies using reinforcement learning.
    \item Execute punishment with perfect commitment (no emotional forgiveness).
    \item Update strategies at a frequency that prevents human oversight.
\end{itemize}

\noindent The shift from human to AI decision-making is not a quantitative increase in collusion risk; it is a \textit{qualitative regime change}, a computational phase transition from the competitive to the collusive equilibrium.

\subsection{Limitations}

The ``only if'' direction (Theorem~\ref{thm:main}(b)) relies on Assumption~\ref{ass:hardness}, which requires that the specific market's CDP instance is hard, not merely that CDP is hard as a problem class. This is a genuine strengthening of $\PP \neq \NP$, and I have argued (Remark~\ref{rem:hardness}) that it holds generically. Markets with special algebraic structure in their demand functions (separability, low rank, sparsity) may violate the assumption even when $\PP \neq \NP$, making collusion feasible in those markets. This is consistent with empirical observation: collusion is easier in simple, transparent markets (few products, stable demand) than in complex ones.

The theorem also assumes rational, profit-maximizing firms. If some firms pursue objectives other than profit (e.g.,\ market share, social responsibility), or if bounded rationality constrains strategy choice beyond computational limitations, the competitive regime may persist longer than the theorem predicts.

A related concern is focal point coordination \citep{schelling1960}. Even without solving the CDP formally, firms might sustain partial collusion through simple focal strategies (e.g.,\ ``match the industry leader's price''). Such strategies do not require NP-hard computation but may sustain above-competitive prices in simple markets. The main theorem addresses this by focusing on \textit{sufficiently complex} markets where simple focal rules fail: with many products, volatile demand, and heterogeneous costs, there is no obvious focal price, and sustaining collusion requires the full detection-punishment apparatus formalized here.

Finally, the compact representation of the market game (Section~\ref{sec:model}) means that the state space $\Theta$ can be exponentially large relative to the description size, which is what makes the CDP hard. In markets where the relevant state space is small or easily enumerable, the NP-hardness results do not bind, and collusion may be feasible by brute-force search. The theorem's predictive force is for markets of increasing complexity, where the exponential state space makes detection intractable.

%=============================================================================
\section{Conclusion}
\label{sec:conclusion}
%=============================================================================

This paper proves that competitive market outcomes are sustained by computational limitations. If $\PP = \NP$, collusion is the equilibrium outcome; if $\PP \neq \NP$ and the market's collusion detection problem is hard on the instances that naturally arise, competition prevails. The result complements and mirrors \citet{maymin2011}: informational efficiency requires $\PP = \NP$, while competition requires $\PP \neq \NP$. Together, these results establish a fundamental impossibility: markets cannot be simultaneously efficient and competitive.

Artificial intelligence is pushing markets across this boundary. As AI systems expand firms' effective computational capabilities, the collusion detection problem that previously exceeded human capacity becomes solvable. The result is a mathematical prediction: sufficiently capable AI agents will sustain collusive outcomes as equilibria, without any explicit coordination or intent.

The implications for competition policy are immediate. Antitrust law built on detecting collusive \textit{intent} or \textit{communication} is structurally inadequate for a world in which collusion emerges from \textit{computation}. The paper proposes computational antitrust: the principle that market complexity is itself a competitive safeguard, and that regulators should design markets to be computationally hard to collude in, just as cryptographers design systems to be computationally hard to break.

Competition, the foundation of market economies since Adam Smith, is not a natural law. It is a consequence of our limitations. As those limitations recede, so may competition. Whether we can design institutions that preserve the benefits of competition in a world of unbounded computation remains to be seen.

%%%%%%%%%%%%%%%%%%%%%%%%%%%%%%%%%%%%%%%%%%%%%%
%% Bibliography                             %%
%%%%%%%%%%%%%%%%%%%%%%%%%%%%%%%%%%%%%%%%%%%%%%

\end{document}